\newcommand{\ie}{i.e.\@\xspace}
\newcommand{\eg}{e.g.\@\xspace}
\DeclareMathAlphabet{\mathpzc}{OT1}{pzc}{m}{it}
\newcommand{\pProgs}{\textnormal{\textsf{pProgs}}\xspace}   
\newcommand{\Vars}{\ensuremath{\mathsf{Var}}\xspace}   
\newcommand{\Vals}{\ensuremath{\mathsf{Val}}\xspace}    
\newcommand{\DExprs}{\ensuremath{\mathsf{PGuards}}\xspace}    
\newcommand{\appProgs}{\Stmt}
\newcommand{\E}{\ensuremath{\mathbb{E}_{\geq 0}^{{\infty}}}}
\newcommand{\Epm}{\ensuremath{\mathbb{E}^{{\star}}}}
\newcommand{\Pairs}{\ensuremath{\mathbb{P}}}
\newcommand{\EqPairs}{\ensuremath{\mathbb{I\hspace{-.1ex}E}}}
\newcommand{\States}{\ensuremath{\Sigma}}
\newcommand{\R}{\ensuremath{\mathbb{R}}}
\newcommand{\Rpos}{\ensuremath{\R_{{}\geq 0}}}
\newcommand{\Rposinf}{\ensuremath{\Rpos^{\infty}}}
\newcommand{\SKIP}{\ensuremath{\textnormal{\texttt{skip}}}}
\newcommand{\AssignSymbol}{\mathrel{\textnormal{\texttt{:=}}}}
\newcommand{\ASSIGN}[2]{\ensuremath{#1 \AssignSymbol #2}}
\newcommand{\COMPOSE}[2]{\ensuremath{#1\textnormal{\texttt{;}}\: #2}}
\newcommand{\ITE}[3]{\ensuremath{\textnormal{\texttt{if}}\left(#1\right)\left\{#2\right\}\textnormal{\texttt{else}}\left\{#3\right\}}}
\newcommand{\If}{\ensuremath{\textnormal{\texttt{if}}}}
\newcommand{\Else}{\ensuremath{\textnormal{\texttt{else}}}}
\newcommand{\WHILEDO}[2]{\ensuremath{\textnormal{\texttt{while}}\left(#1\right)\left\{#2\right\}}}
\newcommand{\WHILE}{\ensuremath{\textnormal{\texttt{while}}}}
\newcommand{\wpsymbol}{\ensuremath{\textnormal{\textsf{wp}}}}
\newcommand{\boldwpsymbol}{\ensuremath{\textnormal{\textsf{\textbf{wp}}}}}
\renewcommand{\wp}[2]{\ensuremath{\wpsymbol\left[{#1}\right]\left({#2}\right)}}
\newcommand{\wpeqpsymbol}{\ensuremath{\widetilde{\textnormal{\textsf{wp}}}}}
\newcommand{\wpeqp}[3]{\ensuremath{\wpeqpsymbol\left[{#1}\right]\EqPair{#2}{#3}}}
\newcommand{\bigLbag}{\raisebox{-.2em}{\text{\LARGE $\Lbag$}}}
\newcommand{\bigRbag}{\raisebox{-.2em}{\text{\text{\LARGE $\Rbag$}}}}
\newcommand{\hugeLbag}{\raisebox{-.2em}{\text{\Huge $\Lbag$}}}
\newcommand{\hugeRbag}{\raisebox{-.2em}{\text{\text{\Huge $\Rbag$}}}}
\newcommand{\charwp}[3]{\tensor*[^{#1}_{#2}]{F}{_{{#3}}}}
\newcommand{\charwpn}[4]{\tensor*[^{#1}_{#2}]{F}{_{{#3}}^{#4}}}
\newcommand{\positive}[1]{\tensor*[^{+}]{#1}{}}
\newcommand{\negative}[1]{\tensor*[^{-}]{#1}{}}
\newcommand{\Exp}[2]{\ensuremath{\textnormal{\textsf{E}}_{#1}\left({#2}\right)}}
\newcommand{\subst}[2]{\ensuremath{\left[{#1}/{#2}\right]}}
\newcommand{\zero}{\ensuremath{\boldsymbol{0}}}
\newcommand{\lfp}{\ensuremath{\textnormal{\textsf{lfp}}\,}}
\newcommand{\eval}[1]{\ensuremath{\llbracket {#1} \rrbracket}}
\newcommand{\probof}[2]{\ensuremath{\llbracket {#1} \colon {#2} \rrbracket}}
\newcommand{\ctert}[1]{\ensuremath{\mathbf{#1}}}
\newcommand{\ExpToFun}[1]{\eval{#1}}
\newcommand{\true}{\ensuremath{\mathsf{true}}\xspace}
\newcommand{\false}{\ensuremath{\mathsf{false}}\xspace}
\newcommand{\To}{\rightarrow}                          
\newcommand{\sem}[1]{\llbracket #1 \rrbracket}
\newcommand{\Stmt}{\ensuremath{\textnormal{\textsf{pProgs}}}\xspace}
\newcommand{\stmt}{\ensuremath{C}\xspace}
\newcommand{\pguard}{\ensuremath{\xi}\xspace}
\newcommand{\mydot}{\text{\LARGE\textbf{.}~}}
\newcommand{\Pair}[2]{({#1},\, {#2})}
\newcommand{\EqPair}[2]{\Lbag{#1},\, {#2}\Rbag}
\newcommand{\bigEqPair}[2]{\bigLbag{#1},~\, {#2}\bigRbag}
\newcommand{\hugeEqPair}[2]{\hugeLbag{#1},~\, {#2}\hugeRbag}
\theoremstyle{plain}
\newtheorem{theorem}{Theorem}
\newtheorem{lemma}{Lemma}
\theoremstyle{definition}
\newtheorem{example}{Example}
\newtheorem{definition}{Definition}
\theoremstyle{remark}
\newtheorem{remark}{Remark}
\begin{document}
%
\title{A Weakest Pre--Expectation Semantics for Mixed--Sign Expectations\thanks{This work was supported by the Excellence Initiative of the German federal and state government and by the CDZ project CAP (GZ 1023).}}



%
\author{\IEEEauthorblockN{Benjamin Lucien Kaminski\IEEEauthorrefmark{2}\IEEEauthorrefmark{3} and
Joost-Pieter Katoen\IEEEauthorrefmark{2}}
\IEEEauthorblockA{\IEEEauthorrefmark{2}Software Modeling and Verification Group\\
RWTH Aachen University, Germany\\
Email: \{benjamin.kaminski,katoen\}@cs.rwth-aachen.de}
\IEEEauthorblockA{\IEEEauthorrefmark{3}Currently on leave at the Programming Principles, Logic and Verification Group\\
University College London, United Kingdom}}


\maketitle

\begin{abstract}
We present a weakest--precondition--style calculus for reasoning about the expected values (pre--expectations) of \emph{mixed--sign unbounded} random variables after execution of a probabilistic program. 
The semantics of a while--loop is defined as the limit of iteratively applying a functional to a zero--element just as in the traditional weakest pre--expectation calculus, even though a standard least fixed point argument is not applicable in our semantics.
A striking feature of our semantics is that it is always well--defined, even if the expected values do not exist.
We show that the calculus is sound and allows for compositional reasoning.
Furthermore, we present an invariant--based approach for reasoning about pre--expectations of loops.
\end{abstract}


%
\IEEEpeerreviewmaketitle

\section{Introduction}
\label{sec:intro}
\noindent
Probabilistic programs are programs that support choi{\-}ces like ``execute program $C_1$ with probability $\nicefrac{1}{3}$ and program $C_2$ with probability $\nicefrac{2}{3}$".
Describing randomized algorithms has been the classical application of probabilistic programs.
Applications in biology, machine learning, quantum computing, security, and so on, have recently led to a rapidly growing interest in such programs~\cite{DBLP:conf/icse/GordonHNR14}.
Although probabilistic programs syntactically are normal--looking programs, reasoning about their correctness is intricate.
The key property of program termination exemplifies this.
Whereas a classical program terminates or not, this is no longer true for probabilistic programs.
They can diverge, but this may happen with probability 0.
In addition, in contrast to classical programs that either do not terminate at all or terminate in finitely many steps, a probabilistic program may take infinitely many steps on average to terminate, even if its termination probability is 1.

Establishing correctness of probabilistic programs needs---even more so than ordinary programs---formal reasoning.
Weakest--precondition ($\wpsymbol$) calculi \`a la Dijk{\-}stra~\cite{Dijkstra} provide an important tool to enable formal reasoning.
To develop such calculi for probabilistic programs, one has to take into account that due to its random nature, the final state of a program on termination need not be unique.
Thus, rather than a mapping from inputs to outputs (as in Dijkstra's approach), probabilistic programs can be thought of mapping an initial state to a \emph{distribution} over possible final states.
More precisely, we may obtain \emph{sub--}distributions where the ``missing'' probability mass represents the likelihood of divergence.
Given a random variable $f$ (e.g.\ $f = x^2 + y$, where $x$ and $y$ are program variables) and an initial state $\sigma$, a key issue is to determine $f$'s expected value\footnote{Commonly called pre--expectation~\cite{mciver}.} on the probabilistic program's termination.
This was first studied in Kozen's seminal work on probabilistic propositional dynamic logic (PPDL)~\cite{DBLP:journals/jcss/Kozen85}.
Its box-- and diamond--modalities provide probabilistic versions of Dijkstra's weakest (liberal) preconditions.
Amongst others, Jones~\cite{jones}, Hehner~\cite{Hehner:FAC:2011}, and McIver~\&~Morgan~\cite{mciver} have furthered this line of research, e.g.\ by considering non--determinism and proof rules for loops.
Recently, Kaminski \emph{et al.}~\cite{DBLP:conf/esop/KaminskiKMO16} provided $\wpsymbol$--style reasoning about the expected run--time of probabilistic programs while Olmedo \emph{et al.}~\cite{DBLP:journals/corr/OlmedoKKM16} consider recursion.

All these works (except PPDL) make an important---though res{\-}tric{\-}tive---as{\-}sump{\-}tion: the random variable $f$ maps program states to the \emph{non--negative} reals.
In McIver \& Morgan's terminology, such random variables $f$ are called \emph{expectations}.
That is to say, the aforementioned approaches do not deal with \emph{mixed--sign} expectations, i.e.\ expectations that can be negative, or even negative and positive.
\mbox{McIver \& Morgan~\cite[pp.\ 70]{mciver}} forbid mixed--sign expectations altogether and argue that ``For mixed--sign or unbounded expectations, however, well--definedness is not assured: such cases must be treated individually. [\ldots] That is, although [a program] itself may be well defined, the greatest pre--expectation [for $f = (-2)^n$] is not---and that is a good reason for avoiding mixed signs in general.''
A workaround is to assume bounded negative values~\cite{DBLP:journals/tcs/McIverM01a}, but this also provides no general solution.

An exception to the widespread and generally condoned neglect of unbounded mixed--sign expectations is Kozen's PPDL~\cite{DBLP:journals/jcss/Kozen85} as it provides an expectation transformer semantics for probabilistic programs with respect to general measurable post--expectations $f$ and thus does \emph{not} forbid mixed--sign expectations altogether.
PPDL's proof rule for \emph{reasoning} about while loops, however, requires $f$ to be non--negative~\mbox{\cite[Section 4, page 168: the ``while rule"]{DBLP:journals/jcss/Kozen85}}.
This proof rule is hence unfit for reasoning about mixed--sign expectations.
In fact, three out of four rules of the deduction system of PPDL that deal with iteration (and therefore with loops) require $f$ to be non--negative and are hence not applicable to reasoning about mixed--sign post--expectations $f$~\cite[Section 4: Rules (8), (9), and the ``while rule"]{DBLP:journals/jcss/Kozen85}.
The only exception to this is a rule that allows for upper bounding the \emph{pre--expectation} by a \emph{non--negative function}, even if $f$ is mixed--sign~\cite[Section 4: Rule (10)]{DBLP:journals/jcss/Kozen85}.
This rule, however, is insufficient for upper--bounding the pre--expectation by a negative value, which in practice can be desirable and is possible in our calculus, see Example \ref{ex:amortized}.

Another drawback of PPDL is that reasoning even about simple programs and properties can become quite involved, requiring a fairly high degree of mathematical reasoning, i.e.\ to say that PPDL requires a lot of reasoning inside the program semantics while the approach of McIver \& Morgan and the approach we present in this paper constitutes more of a syntactic reasoning on the source code level.
For example,~\mbox{\cite[Section 7]{DBLP:journals/jcss/Kozen85}} gives a circa two--page proof sketch of the expected run--time of a ``simple random walk" carried out in PPDL.
It requires a fair amount of domain--specific knowledge about integers and combinatorics and is thus not easily amenable to automation.
A complete proof of the expected run--time in the $\wpsymbol$--calculus \`{a} la McIver~\&~Morgan requires only a fraction of the effort (see Appendix~\ref{app:runtimewpproof}).
Partial automations of $\wpsymbol$--style proofs in theorem provers such as Isabelle/HOL have been developed~\cite{Hurd:TPHOL:02,DBLP:journals/afp/Cock14}.
A partial automation of $\wpsymbol$--style proofs for expected run--times in the vein of~\cite{DBLP:conf/esop/KaminskiKMO16} has recently been carried out by H\"{o}lzl~\cite{DBLP:conf/itp/Holzl16}.
The $\wpsymbol$--style calculus for mixed--sign expectations we present here is closely related to the standard weakest pre--expectation calculus and so we believe that existing automation techniques are likely to carry over easily.

At first sight, avoiding mixed--sign expectations looks like a minor technical restriction.
In practice it is not:
For instance, program variables may become negative during program execution, having a negative impact of $f$'s value.
As another example, the efficiency of data structures such as randomized splay trees~\cite{DBLP:journals/ipl/AlbersK02} is typically carried out using amortized analysis.
Such analysis is similar to expected run--time analysis, but is concerned with the cost averaged over a sequence of operations.
In the accounting and potential method in amortized analysis, a decrease in potential (or credit) ``pays for'' particularly expensive operations whereas increases model cheap operations.
The amortized cost $f$ during the execution of a probabilistic program may thus become arbitrarily negative.
Finally, we mention that negative expectations or even negative probabilities have applications in quantum computing and finance.\footnote{See \url{https://en.wikipedia.org/wiki/Negative_probability} and the various references therein.}

Current $\wpsymbol$--approaches do not handle the aforementioned scenarios off--the--shelf.
A workaround is to perform a Jordan decomposition of $f$ by $f = \positive{f} - \negative{f}$, where $\positive{f}$ and $\negative{f}$ are both non--negative expectations, and analyze $\positive{f}$ and $\negative{f}$ individually using the standard $\wpsymbol$--calculus.
This, however, can easily become quite involved, for example when trying to reason about the expected value of $x$ after execution of 
\begin{align*}
	\WHILEDO{\nicefrac{1}{2}}{\ASSIGN{x}{-x-\textsf{sign}(x)}}~.
\end{align*}
In every iteration, a fair coin is flipped to decide whether to terminate the loop or execute its body followed by a recursive execution of the entire loop.
Intuitively, this program computes a variant of a geometric distribution on $x$ where the sign alternates with increasing absolute value of $x$.
The expected value of $x$ after execution of the above program is given by $\nicefrac{x}{3} - \nicefrac{\textsf{sign}(x)}{9}$.
A detailed comparison of tackling this analysis by the methods presented in this paper to a Jordan--decomposition--based approach is provided in Appendix~\ref{app:comparison}.

Despite the existence of a mathematical theory of signed random variables, there are good reasons why they are avoided in current $\wpsymbol$--approaches: the notion of expectation needs to be reconsidered, and a complete partial order on these adapted expectations---key to defining the semantics of loopy programs---is required.
It turns out that this is not trivial.
It is this challenge that this paper attempts to take up. 
We provide a sound semantics of probabilistic programs that directly manipulates mixed--sign expectations $f$.
In particular, \emph{our semantics is always defined regardless of whether classical pre--expectations}~\cite{DBLP:journals/jcss/Kozen85,mciver,Hehner:FAC:2011} \emph{exist or not}.
We start by redefining what an expectation that can be negative in fact is.
The crux of our approach is to keep track of the integrability of the mixed--sign random variable $f$ by accompanying $f$ with a non--negative (but possibly infinite) expectation $g$ that bounds $|f|$.
Notice that we \emph{do not} require $f$ to be integrable as we want our semantics to be well--defined regardless of whether $f$ is integrable or not.
Instead, our semantics internally keeps track of $f$'s integrability.
We obtain a partial order by considering the kernel of a quasi--order on pairs $(f,\, g)$.
Equivalence classes under this kernel constitute the counterpart of expectations for the setting with mixed--sign random variables. 
This provides the basis for providing a sound $\wpsymbol$--calculus for reasoning about probabilistic programs with mixed--sign expectations.
In our setting, providing a sound semantics for loops cannot be done in the standard way, as Kleene's fixed point theorem is not applicable.
We therefore provide a direct proof.
An important ingredient to make this work is proving the existence of unique limits of sequences of equivalence classes of pairs $(f,g)$.
Moreover, we prove monotonicity and soundness of our novel weakest pre--expectation transformer.
This all is accompanied by a proof rule for reasoning about loops.
Various examples show the applicability of our transformer.

\paragraph*{Organization of the paper} 
In Section 2, we present syntax and effects of the probabilistic programming language that we build upon.
In Section 3, we revisit the traditional $\wpsymbol$--calculus and investigate the problems that would occur when naively letting the calculus act on mixed--sign expectations.
In Section 4, we present a new notion of mixed--sign expectations called \emph{integrability--witnessing expectations}, which incorporate bookkeeping for the integrability of the expectations.
In Section 5, we present a $\wpsymbol$--calculus acting on integrability--witnessing expectations.
In Section 6, we show that our calculus is sound and allows for monotonic reasoning.
Furthermore, we present an invariant rule for reasoning about loops and show its applicability.
We conclude with Section 7.

\section{The Probabilistic Programming Language}
\label{sec:language}
\noindent
In this section we present the probabilistic programming language used throughout this paper.
To model probabilistic programs we employ a standard imperative language \`a la Dijkstra's Guarded Command Language~\cite{Dijkstra} with a probabilistic feature: 
we allow for the guards that guard if--then--else constructs and while--loops to be probabilistic.
As an example, we allow for a program like
\begin{align*}
	&\WHILE \: \big(\nicefrac{2}{3} \cdot \langle x \text{ even} \rangle + \nicefrac{1}{3} \cdot \langle x \text{ odd} \rangle\big) \: \{ \ASSIGN{x}{x+1}\}
\end{align*}
which uses a probabilistic loop guard to establish a variant of a geometric distribution on the program variable $x$.
With probability $\nicefrac 2 3$ the loop \emph{terminates} if $x$ is odd and with probability $\nicefrac 1 3$ the loop terminates if $x$ is even.

Formally, the set of programs in the \emph{probabilistic guarded command language}, denoted \pProgs, is given by the grammar
\begin{align*}
\stmt  \quad\longrightarrow\quad &\SKIP ~~|~~ \ASSIGN{x}{E} ~~|~~ \COMPOSE{\stmt}{\stmt}\\
& ~~|~~ \ITE{\pguard}{\stmt}{\stmt} ~~|~~ \WHILEDO{\pguard}{\stmt}~.
\end{align*}
Here $x$ is a \emph{program variable} in \Vars, $E$ an arithmetical expression over program variables and $\pguard$ a \emph{probabilistic guard} in $\DExprs$.

To describe the effect of the different language constructs we first present some preliminaries.
A \emph{program state} $\sigma$ is a mapping from a finite set of program variables $\Vars$ to a countable set of values $\Vals$.  
Let $\States = \{\sigma ~|~ \sigma \colon \Vars \To \Vals\}$ denote the \emph{set of program states}. 
We assume an interpretation function $\eval{\:\cdot\:}\colon \DExprs \To \States \To [0,\, 1]$ for probabilistic guards: 
$\eval{\pguard}$ maps each program state to the probability that the guard evaluates to $\true$.
We write $\eval{\neg \pguard}$ as a shorthand for $\lambda \sigma\mydot 1 - \eval{\pguard}(\sigma)$.
E.g. $\eval{x \geq y}(\sigma)$ evaluates with probability 1 to \true if $\sigma(x) \geq \sigma(y)$ and otherwise with probability 1 to \false.
As another example $\eval{\nicefrac 1 2}(\sigma)$ evaluates with probability $\nicefrac 1 2$ to \true and with probability $\nicefrac 1 2$ to \false, regardless of $\sigma$.

We now present the effects of programs in $\pProgs$.
For that, let $\sigma$ be the current program state.
$\SKIP$ has no effect on the program state.
$\ASSIGN{x}{E}$ is an assignment which evaluates expression $E$ in the current program state and assigns this value to variable $x$.
$\COMPOSE{C_1}{C_2}$ is the sequential composition of programs $C_1$ and $C_2$, \ie first $C_1$ is executed, then $C_2$.
$\ITE{\pguard}{C_1}{C_2}$ is a probabilistic conditional branching: $C_1$ is executed with probability $\eval{\pguard}(\sigma)$ and $C_2$ with probability $1-\eval{\pguard}(\sigma)$.
$\WHILEDO{\pguard}{C}$ is a probabilistically guarded while loop: with probability $\eval{\pguard}(\sigma)$ the loop body $C$ is executed followed by a recursive execution of the loop, whereas with probability $1-\eval{\pguard}(\sigma)$ the loop terminates.
\needspace{3\baselineskip}
\begin{example}[Tortoise and Hare~\cite{Chakarov:CAV:13}]\label{ex:program}
The program
\begin{align*}
&\COMPOSE{\COMPOSE{\ASSIGN{t}{30}}{\ASSIGN{h}{0}}}{}\\
&\WHILE~(h \leq t)~\{ \\
&\qquad \COMPOSE{\ASSIGN{t}{t+1}}{} \\
&\qquad \ITE{\nicefrac 1 2}{\ASSIGN{h}{h + 3}}{\SKIP} \}
\end{align*}
illustrates the use of the programming language.
It models a race between a tortoise and a hare ($t$ and $h$ represent their respective positions). 
The tortoise starts with a lead of $30$ and advances one step forward in each round. 
The hare advances three steps or remains still, both with the remaining probability of $\nicefrac 1 2$. 
The race ends when the hare passes the tortoise. 
\hfill$\triangle$
\end{example}

\label{sec:wp}
\section{Non--Negative Weakest Pre--Expectations}
\label{sec:wp-positive}
\noindent
In this section, we recall the standard weakest pre--expectation semantics which acts on non--negative random variables.
When we start a probabilistic program $C$ in some initial state $\sigma$, the final state after termination of $C$ need not be unique due to $C$'s probabilistic nature.
In fact, not even the event of $C$'s termination itself needs to be determined as the program's computation might diverge with a probability that is neither 0 nor 1.
So instead of thinking of $C$ as a mapping from initial to final states, we can rather think of $C$ as a mapping from a distribution $\mu_0$ of initial states to a distribution $\sem{C}(\mu_0)$ of final states.
In order to account for non--termination, we do not require the total probability mass of these distributions to sum up to 1 but any probability between 0 and 1 is valid.
The missing probability mass represents then the probability of non--termination.

Given a random variable $f$ mapping program states to \emph{positive reals}, we can ask:
What is the expected value of $f$ after termination of $C$ when the input to $C$ is distributed according to $\mu_0$?
E.g., what is the expected value of $h$ after termination of 
$\If~\bigl(\nicefrac 1 2 \big)~\{\ASSIGN{h}{h + 3}\}~\Else~\{\SKIP\}$
on an initial distribution in which $h$ is $4$ with probability $\nicefrac{2}{3}$ and $h$ is $7$ with probability $\nicefrac{1}{3}$?

In this case, the answer is $6.5$. In general, an answer to this type of questions can be obtained by means of the weakest pre--expectation calculus \cite{DBLP:journals/jcss/Kozen85,mciver,gretz}:
This calculus can be used to reason about the expected value of a random variable after termination of a probabilistic program $C$.
More precisely, the weakest pre--expectation transformer $\wpsymbol[C]$ transforms a given non--negative random variable $f$ into a random variable $g = \wp{C}{f}$, such that for any initial distribution $\mu_0$ the expected value of $f$ under the \emph{final} distribution $\sem{C}(\mu_0)$ coincides with the expected value of $g$ under the \emph{initial} distribution $\mu_0$.\footnote{A correspondence between the operational point of view outlined in Section \ref{sec:language} and the denotational $\wpsymbol$--semantics for probabilistic programs is provided in \cite{gretz}.}
Put formally, we have
\begin{align}
 \Exp{\mu_0}{\wp{C}{f}} ~=~ \Exp{\sem C(\mu_0)}{f}~,\label{eq:def-wp}
\end{align}
where $\Exp{\mu}{h}$ denotes the expected value of a random variable $h$ under distribution $\mu$.
In particular, if the program $C$ is started in a single determined initial state $\sigma$, then the expected value of $f$ after termination of $C$ on input $\sigma$ is given by $\wp{C}{f}(\sigma)$, since $\wp{C}{f}(\sigma) = \Exp{\delta_\sigma}{\wp{C}{f}} = \Exp{\sem C(\delta_\sigma)}{f}$, where $\delta_\sigma$ is the Dirac distribution that assigns the entire probability mass (i.e.\ 1) to the single point $\sigma$.

Notice that $\wpsymbol$ is {not} a distribution transformer per se.
Nevertheless, given some predicate $A$, we can express the probability that $C$ terminates on initial state $\sigma$ in some state satisfying $A$ in terms of $\wpsymbol$ by $\wp{C}{[A]}(\sigma)$, where $[A]$ is the indicator function of predicate $A$.

In the context of the weakest pre--expectation calculus, random variables are usually referred to as \emph{expectations}:
$f$ is called the \emph{post--expectation} and $g = \wp{C}{f}$ is called the \emph{pre--expectation}.\footnote{As the \emph{post}expectation is evaluated in the \emph{final} states and the \emph{pre}expectation is evaluated in the \emph{initial} states.}
The set of expectations is denoted by
\begin{align*}
	\E ~=~ \left\{f ~\middle|~ f \colon \Sigma \rightarrow \Rposinf \right\}~,
\end{align*}
where $\Rposinf = \{r \in \R ~|~ r \geq 0\} \cup \{\infty\}$.
We need the extended real line here, as we want $\wp{C}{f}$ to always be defined for any $C \in \pProgs$ and any $f \in \E$ and the expected value of $f$ after termination of $C$ can easily become infinity.
Notice that for a probabilistic guard $\pguard$, formally both $\eval{\pguard}$ and $\eval{\neg\pguard}$ are expectations as e.g.\ $\eval{\pguard}\colon \Sigma \rightarrow [0,\, 1]$ and so $\eval{\pguard} \in \E$.
\begin{remark}[Positivity of Expectations]
\label{remark:positive}
Since we have restricted ourselves to non--negative random variables in $\E$, the expected value in \mbox{Equation}~(\ref{eq:def-wp}) is always a well--defined positive real or $+\infty$ for any initial distribution. \hfill$\triangle$
\end{remark}
\noindent
The weakest pre--expectation transformer $\wpsymbol[C]$ can be defined by induction on the structure of the program $C$ according to \autoref{table:wp-rules}.
\begin{table}
\caption{Definitions for the $\wpsymbol$  Transformer Acting on $\E$.  
}
\label{table:wp-rules}
\vspace{-1\baselineskip}
\begin{center}
\renewcommand{\arraystretch}{1.5}
\begin{tabular}{l@{\qquad}l}
\hline
$\boldsymbol{C}$ & $\boldwpsymbol\boldsymbol{[C](f)}$\\
\hline\\[-4ex]
%
$\SKIP$ & $f$\\
%
%
%
$\ASSIGN x E$ & $f \subst{x}{E}$\\
$\COMPOSE{C_1}{C_2}$ & $\wp{C_1}{\vphantom{\big(}\wp{C_2}{f}\vphantom{\big(}}$\\
$\ITE{\pguard}{C_1}{C_2}$ & $\eval{\pguard} \cdot \wp{C_1}{f}  + \eval{\neg \pguard} \cdot \wp{C_2}{f}$\\
$\WHILEDO{\pguard}{C'}$ & $\lfp X\mydot \eval{\neg \pguard} \cdot f  + \eval{\pguard} \cdot \wp{C'}{X}$\\
\hline
\end{tabular}
\end{center}
\end{table}
Let us briefly go over these definitions:
$\wpsymbol[\SKIP]$ behaves as the identity since $\SKIP$ does not modify the program state. 
For $\wp{\ASSIGN{x}{E}}{f}$ we return $f\subst{x}{E}$ which is obtained from $f$ by a sort of ``syntactic replacement" of $x$ by $E$, just as in Hoare logic.
More formally, $f\subst{x}{E} = \lambda \sigma\mydot f(\sigma[x \mapsto \sigma(E)])$.
$\wp{\COMPOSE{C_1}{C_2}}{f}$ obtains a pre--expectation for the program $\COMPOSE{C_1}{C_2}$ by applying $\wpsymbol[C_1]$ to
the intermediate expectation obtained from $\wp{C_2}{f}$.
$\wp{\ITE{\pguard}{C_1}{C_2}}{f}$ weights $\wp{C_1}{f}$ and $\wp{C_2}{f}$ according to the probability of the guard evaluating to \true and \false.
Addition and multiplication of expectations is meant pointwise here, so $f + g = \lambda \sigma \mydot f(\sigma) + g(\sigma)$ and $f \cdot g = \lambda \sigma \mydot f(\sigma) \cdot g(\sigma)$.
Before we turn to the definitions for while--loops, let us illustrate the effects of the $\wpsymbol$ transformer by means of an example:
\begin{example}[Truncated Geometric Distribution] 
  \label{example:trunc}
  Consider the following probabilistic program:
  \begin{align*}
C_\mathit{trunc}\boldsymbol{\colon} \;\; 
 &\If~\bigl(\nicefrac 1 2 \bigr)~\{ \SKIP\} ~\Else~ \{\\
 &\qquad \COMPOSE{\ASSIGN{x}{x + 1}}{}\\
&\qquad \If~\bigl(\nicefrac 1 2 \bigr)~\{ \SKIP \}~\Else~\{ \ASSIGN{x}{x + 1} \} \}
\end{align*}
It can be viewed as modeling a truncated geometric distribution: we repeatedly flip a fair coin until observing the first, say, heads or completing the second unsuccessful trial. 
Suppose we want to know the expected value of $x$.
Then we can calculate this by calculating $\wp{C_\mathit{trunc}}{x}$ as follows:\footnote{We have overloaded the notation $x$ that actually denotes the program variable $x$ to the expectation $\lambda \sigma \mydot \sigma(x)$ for the sake of readability.}
\begin{align*}
&\wp{\stmt_\mathit{trunc}}{x}\\
&=~	\frac{1}{2} \cdot \wp{\SKIP}{x} + \frac{1}{2} \cdot
\wp{\COMPOSE{\ldots}{\ldots}}{x}\\
&=~\frac{x}{2}  + \frac{1}{2} \cdot \wpsymbol[\ASSIGN{x}{x + 1}] \left( \frac{1}{2} \cdot \wp{\SKIP}{x}\right.\\
&\qquad\qquad\qquad\qquad\qquad\qquad~\left.{} + \frac{1}{2} \cdot \wp{\ASSIGN{x}{x+1}}{x} \right) \\
&=~	\frac{x}{2} + \frac{1}{2} \cdot \wpsymbol[\ASSIGN{x}{x + 1}] \left( \frac{x}{2} + \frac{x+1}{2}\right) \\
&=~	\frac{x}{2} + \frac{x+1}{4} + \frac{x + 2}{4} ~=~ x + \frac{3}{4}
\end{align*}
Therefore, the expected value of $x$ after execution of $\stmt_\mathit{trunc}$ is $x + \nicefrac{3}{4}$, where $x + \nicefrac{3}{4}$ is to be evaluated in the initial state in which $\stmt_\mathit{trunc}$ is started.
\hfill$\triangle$
\end{example}
\noindent
We now turn to weakest pre--expectations of while--loops.
While the calculation of $\wpsymbol$ in the above example was straightforward as the program $\stmt_\mathit{trunc}$ is loop--free, $\wpsymbol$ of while--loops is defined using fixed point techniques.
For that, we need a complete partial order $\left(\E,\, \leq\right)$ which is given by
\begin{align*}
	f ~\leq~ g \quad\text{iff}\quad \forall \, \sigma\colon f(\sigma) ~\leq~ g(\sigma)~.
\end{align*}
\normalsize
The bottom element of this complete partial order is given by the constantly zero expectation $\zero = \lambda \sigma\mydot 0$.
The supremum is taken pointwise, so for any subset $D\subseteq \E$, $\sup D = \lambda \sigma\mydot \sup_{f \in D} f(\sigma)$.
Notice that this pointwise supremum always exists as any bounded set of real numbers has a supremum and $+\infty$ is a valid supremum of unbounded sets.
Thus $\left(\E,\, \leq\right)$ is indeed a \emph{complete} partial order with bottom element $\zero$. (It is even a complete lattice.)

Using this complete partial order, the weakest pre--expectation of a while--loop $\WHILEDO{\pguard}{C'}$ is then given in terms of the least fixed point of a special transformer $\charwp{\pguard}{C'}{f}\colon \E \To \E$ constructed from the loop guard $\pguard$, the postexpectation $f$, and the $\wpsymbol$ transformer of the loop body $\wpsymbol[C']$ (see \autoref{table:wp-rules}).
The transformer $\charwp{\pguard}{C'}{f}$ is given by
\begin{align*}
	\charwp{\pguard}{C'}{f}(X) ~=~ \eval{\neg \pguard} \cdot f + \eval{\pguard} \cdot \wp{C'}{X}~.
\end{align*}
We call this transformer $\charwp{\pguard}{C'}{f}$ the \emph{characteristic functional} of $\WHILEDO{\pguard}{C'}$ with respect to post--expectation $f$.
The existence of the least fixed point of $\charwp{\pguard}{C'}{f}$ is ensured by a standard denotational semantics argument (see \eg \cite[Ch.~5]{Winskel:1993}), namely Scott--continuity (or simply continuity) of $\charwp{\pguard}{C'}{f}$ which follows from continuity of $\wpsymbol[C']$.
By completeness of the partial order $\left(\E,\, \leq\right)$ and continuity of the transformer $\charwp{\pguard}{C'}{f}$, the Kleene Fixed Point Theorem \cite{lassez1982fixed,abramsky1994domain} gives an even stronger result than mere existence of a least fixed point.
It states that this least fixed point can be constructed in $\omega$ steps by iterated application of $\charwp{\pguard}{C'}{f}$ to the least element $\zero$, i.e.\ 
\begin{align*}
	\lfp \charwp{\pguard}{C'}{f} ~=~ \sup_{n \in \mathbb N} \charwpn{\pguard}{C'}{f}{n}(\zero)~,
\end{align*}
where $\charwpn{\pguard}{C'}{f}{n}$ stands for $n$-fold application of $\charwp{\pguard}{C'}{f}$ to its argument.
As mentioned, this result holds only for continuous functions.
Continuity of $\wpsymbol[C]$ can be shown by structural induction on the structure of $C$ in case $C$ is not a loop and fixed point induction in case that $C$ is a loop.
Besides continuity, the $\wpsymbol$ transformer enjoys several other useful properties:
\begin{theorem}[Properties of $\wpsymbol$ Acting on $\E$ \cite{DBLP:journals/jcss/Kozen85,mciver,jones}]\label{thm:wp-prop}
  For any program $\stmt \in \appProgs$ the following properties hold:
\begin{enumerate}
	\item[\textnormal{\textbf{(1)}}] 
	\textbf{Continuity:} For any subset of expectations $D \subseteq \E$:
	\begin{align*}
		\wp{\stmt}{\sup D} ~=~ \sup_{f \in D} \wp{\stmt}{f}
	\end{align*}
	\item[\textnormal{\textbf{(2)}}] 
	\textbf{Monotonicity:} For any two expectations $f, g\in \E$:
	\begin{align*}
		f ~\leq~ g \quad\text{implies}\quad \wp{\stmt}{f} ~\leq~ \wp{\stmt}{g}
	\end{align*}
	\item[\textnormal{\textbf{(3)}}] 
	\textbf{Linearity:}  For any two expectations $f, g\in \E$ and any constant $r \in \Rpos$:
	\begin{align*}
		\wp{\stmt}{f + r \cdot g} ~=~ \wp{\stmt}{f} + r \cdot \wp{\stmt}{g}
	\end{align*}
	%
	%
	\item[\textnormal{\textbf{(4)}}] 
	\label{thm:wp-prop:4}
	\textbf{Upper Loop Invariants:} For any expectation $I \in \E$:
	\begin{align*}
		\charwp{\pguard}{C}{f}(I) ~\leq~ I ~~\text{implies}~~ \wp{\WHILEDO{\pguard}{C}}{f} \leq I
	\end{align*} 
	\item[\textnormal{\textbf{(5)}}] 
	\label{thm:wp-prop:5}
	\textbf{Lower Loop Invariants:} For any sequence of expectations $(I_n)_{n \in \mathbb N} \subseteq \E$:
	\begin{align*}
		&I_0 ~\leq~ \charwp{\pguard}{C}{f}(\zero) \quad \text{and} \quad I_{n+1} ~\leq~ \charwp{\pguard}{C}{f}(I_n)\\
		&\qquad\text{implies}\qquad \sup_{n \in \mathbb N} I_n ~\leq~ \wp{\WHILEDO{\pguard}{C}}{f}
	\end{align*} 
\end{enumerate}
\end{theorem}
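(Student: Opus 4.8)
The plan is to prove all five properties simultaneously by structural induction on $\stmt \in \appProgs$, establishing monotonicity and continuity first, since these underpin the loop cases of the remaining claims. For the atomic programs the properties reduce to elementary pointwise computations: $\wpsymbol[\SKIP]$ is the identity and $\wpsymbol[\ASSIGN{x}{E}]$ is precomposition with the state update $\sigma \mapsto \sigma[x\mapsto\sigma(E)]$, both of which manifestly commute with pointwise suprema and with linear combinations. For sequential composition $\COMPOSE{C_1}{C_2}$ the claims follow by composing the two induction hypotheses, and for the probabilistic conditional $\ITE{\pguard}{C_1}{C_2}$ they follow from the induction hypotheses for $C_1,C_2$ together with the fact that multiplication by the fixed bounded expectations $\eval{\pguard}$ and $\eval{\neg\pguard}$ preserves suprema and distributes over linear combinations.

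The substance of the argument is concentrated in the loop $\WHILEDO{\pguard}{C'}$, where $\wp{\WHILEDO{\pguard}{C'}}{f} = \lfp \charwp{\pguard}{C'}{f} = \sup_{n} \charwpn{\pguard}{C'}{f}{n}(\zero)$ by the Kleene characterization. The key device is to study the finite iterates as functions of the post--expectation. Writing $G_n(f) = \charwpn{\pguard}{C'}{f}{n}(\zero)$, I would first observe that $\charwp{\pguard}{C'}{f}$ is monotone (an immediate consequence of monotonicity of $\wpsymbol[C']$, itself available from the induction hypothesis), so that the iterates $\bigl(G_n(f)\bigr)_n$ form an ascending chain for each fixed $f$. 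I would then prove by an inner induction on $n$ that $G_n$ is continuous (resp.\ linear) in $f$: the step case expands $G_{n+1}(f) = \eval{\neg \pguard} \cdot f + \eval{\pguard} \cdot \wp{C'}{G_n(f)}$ and pushes the supremum (resp.\ linear combination) inward, using the structural induction hypothesis on $C'$ for the inner $\wpsymbol[C']$ and the inner induction hypothesis for $G_n$.

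To conclude continuity of the loop I would commute the two suprema, $\wp{\WHILEDO{\pguard}{C'}}{\sup D} = \sup_n G_n(\sup D) = \sup_n \sup_{f\in D} G_n(f) = \sup_{f\in D}\sup_n G_n(f) = \sup_{f\in D}\wp{\WHILEDO{\pguard}{C'}}{f}$; monotonicity (property (2)) then drops out as the special case $D = \{f,g\}$ with $f \leq g$. Linearity of the loop follows analogously by taking the supremum over $n$ of the linear identity for $G_n$ and using that the chains are ascending, so scalar multiplication and addition commute with $\sup_n$. The two invariant rules require no induction on program structure: property (4) is exactly the induction principle for least fixed points, since any $I$ with $\charwp{\pguard}{C}{f}(I)\leq I$ is a pre--fixed point and hence dominates $\lfp\charwp{\pguard}{C}{f}$; and property (5) follows by proving $I_n \leq \charwpn{\pguard}{C}{f}{n+1}(\zero)$ by induction on $n$ (the step invoking monotonicity of $\charwp{\pguard}{C}{f}$) and then taking the supremum against the Kleene characterization.

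The main obstacle I anticipate is the interchange of the fixed--point supremum $\sup_n$ with the operation in question in the loop case. The delicate point is the sum--of--suprema identity $\sup_f(a_f + b_f) = \sup_f a_f + \sup_f b_f$, which fails for arbitrary families but holds here because $D$ is directed and the relevant families are monotone in the index; this is precisely what makes the inner--induction step for continuity and linearity of $G_{n+1}$ go through. Establishing monotonicity of the iterates up front, so that all suprema are genuinely over chains, is the cleanest way to license these interchanges.
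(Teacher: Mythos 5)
Your proposal is, in substance, the standard proof of these facts, and it coincides with the only guidance the paper itself gives: the paper does not prove Theorem~\ref{thm:wp-prop} at all, but cites it to Kozen, McIver--Morgan and Jones, remarking only that continuity is shown by structural induction on $C$ plus a fixed--point argument for loops. Your decomposition --- structural induction with continuity and monotonicity established first, an inner induction on $n$ for the Kleene iterates $\charwpn{\pguard}{C'}{f}{n}(\zero)$ of the loop, the prefixed--point (Knaster--Tarski / Park) principle for property (4), and the comparison $I_n \leq \charwpn{\pguard}{C}{f}{n+1}(\zero)$ followed by taking suprema for property (5) --- is exactly how the cited literature argues, and each step you outline goes through.

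One point must be made explicit rather than slipped in. The theorem as printed asserts continuity for \emph{any} subset $D \subseteq \E$, but your proof (correctly) uses directedness of $D$ in two places: to know that the image $\{\charwpn{\pguard}{C'}{f}{n}(\zero) : f \in D\}$ of a directed set under a monotone map is again directed so that the structural induction hypothesis applies to it, and to license the interchange $\sup_{f}a_f + \sup_{f}b_f = \sup_{f}(a_f + b_f)$ for families monotone in $f$. This hypothesis is not cosmetic, because for non--directed $D$ the claimed identity is simply false: take $C$ to be $\ITE{\nicefrac{1}{2}}{\ASSIGN{x}{0}}{\ASSIGN{x}{1}}$ and $D = \{[x = 0],\, [x = 1]\}$; then $\wp{C}{[x=0]} = \wp{C}{[x=1]} = \nicefrac{1}{2}$, so $\sup_{f \in D}\wp{C}{f} = \nicefrac{1}{2}$, whereas $\sup D = [x = 0 \vee x = 1]$ pointwise and $\wp{C}{\sup D} = 1$. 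So your proof establishes property (1) for \emph{directed} subsets (equivalently, for $\omega$--chains), which is the form in which the cited references state it and the only form that is provable; you should say so explicitly instead of treating directedness as given. Nothing downstream is affected, since every use of continuity in the paper (the ascending Kleene chains, and your derivation of monotonicity from the two--element chain $\{f, g\}$ with $f \leq g$) involves directed families only.
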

\noindent
We saw that $\wpsymbol$ is well--defined and enjoys several useful properties if we deal only with positive expectations (recall Remark~\ref{remark:positive}).
When dealing with expected values of mixed--sign random variables, things become much more intricate, even in classical probability theory where no computational aspects are considered. 
In the next section, we show how the $\wpsymbol$ calculus can be extended to act on mixed--sign expectations.
\label{sec:wp-negative}
\section{Integrability--Witnessing Expectations}

\noindent
In this section we outline some problems that occur when dealing with mixed--sign expectations and present our idea on how to circumvent them by incorporating a mechanism that keeps track of the integrability of the expectations.

\subsection{Convergence and Definedness Issues}
\label{sec:issues}

\noindent
So far we had our $\wpsymbol$--transformer act on the set $\E$ of po{\-}si{\-}tive valued expectations.
For expectations that may also take negative values, the expected value after program termination might not be defined for different reasons.
In the following, we present two problematic examples.

\paragraph{Indefinite Divergence}
As a first example, we adopt a counterexample from McIver \& Morgan \cite{mciver}: Consider the mixed--sign random variable $f = (-2)^x$.
The expected value of $f$ after execution of $C_\mathit{geo}$, given by
\begin{align*}
	C_\mathit{geo}\boldsymbol{\colon}\quad&\COMPOSE{\ASSIGN{x}{1}}{}\WHILE (\nicefrac{1}{2})\{ \ASSIGN{x}{x+1} \}~,
\end{align*}
on
an arbitrary initial state is described by the series\footnote{$\sum_{v \in \Vals} \mathsf{Pr}_{\eval{C_\mathit{geo}}(\delta_\sigma)}(v) \cdot f(v) = \sum_{i = 1}^{\infty} \frac{(-2)^i}{2^i}$.}
\begin{align*}
	S~=~ \sum_{i = 1}^{\infty} \frac{(-2)^i}{2^i} ~=~ -1 + 1 - 1 + 1 - 1 + \cdots~,
\end{align*}
which is indefinitely divergent, i.e.\ it neither converges to any real value nor does it tend to $+\infty$ or $-\infty$.
Furthermore, the summands of this series can be reordered in such ways that the series tends to $+\infty$ or that it tends to $-\infty$.
In any case, there exists no meaningful and in particular no \emph{unique} expected value of $f$ and thus no classical pre--expectation $\wp{C_\mathit{geo}}{f}$.

If we were to naively apply the standard weakest pre--expectation calculus, we would first obtain a pre--expectation for the loop, by constructing the characteristic functional
\begin{align*}
	F(X) ~\coloneqq~ \charwp{\nicefrac{1}{2}}{\ASSIGN{x}{x+1}}{(-2)^x}(X) ~=~ \frac{(-2)^x}{2} + \frac{X \subst{x}{x+1}}{2}
\end{align*}
and then doing fixed point iteration, i.e.\ iteratively apply $F$ to $\zero$. In doing so, we get the sequence
\begin{align*}
	F(\zero) ~=~ &\frac{(-2)^x}{2} \\
	F^2(\zero) ~=~ &\frac{(-2)^x}{2} + \frac{(-2)^{x+1}}{4} \\
	F^3(\zero) ~=~ &\frac{(-2)^x}{2} + \frac{(-2)^{x+1}}{4}  + \frac{(-2)^{x+2}}{8}\\
	\intertext{and so on. Notice, that the sequence $(F^n(\zero))_{n \in \mathbb N}$ is \emph{not} monotonically increasing, so iteratively applying $F$ to $\zero$ does not yield an ascending chain. If we nevertheless took the limit of this sequence---naively assuming it exists---, we would get}
	F^{\omega}(\zero) ~=~ &\sum_{i=0}^{\omega} \frac{(-2)^{x + i}}{2^{i+1}} ~.
\end{align*}
Finally, we have to apply the $\wpsymbol$--semantics of the assignment preceding the while--loop to $F^\omega(\zero)$, i.e.\ we have to calculate $\wp{\ASSIGN{x}{1}}{F^{\omega}(\zero)}$, which gives
\begin{align*}
	\sum_{i=0}^{\omega} \frac{(-2)^{i + 1}}{2^{i+1}} ~=~ -1 + 1 - 1 + 1 - 1 + \cdots~.
\end{align*}
This is not well--defined and hence we see that the standard $\wpsymbol$ cannot be applied to this example as is.

\paragraph{Non--Absolute Convergence}
As a second example, consider the expected value of the mixed--sign random variable $f' = \nicefrac{(-2)^x}{x}$ after executing $C_\mathit{geo}$.
It is described by the series\footnote{$\sum_{v \in \Vals} \mathsf{Pr}_{\eval{C_\mathit{geo}}(\delta_\sigma)}(v) \cdot f'(v) = \sum_{i = 1}^{\infty} \frac{(-2)^i}{2^i \cdot i}$.}
\begin{align*}
	S'~=~ \sum_{i = 1}^{\infty} \frac{(-2)^i}{2^i \cdot i} ~=~ -1 + \frac{1}{2} - \frac{1}{3} + \frac{1}{4} - \frac{1}{5} + \cdots~.
\end{align*}
This series in this particular ordering converges to $- \ln(2)$.

Again, if we were to naively apply the standard weakest pre--expectation calculus, we would first obtain a pre--expectation for the loop, by constructing the characteristic functional
\begin{align*}
	F'(X) ~\coloneqq~ \charwp{\nicefrac{1}{2}}{\ASSIGN{x}{x+1}}{\nicefrac{(-2)^x}{x}}(X) ~=~ \frac{(-2)^x}{2 \cdot x} + \frac{X \subst{x}{x+1}}{2}
\end{align*}
and then do fixed point iteration, i.e.\ iteratively apply $F'$ to $\zero$.
This yields
\begin{align*}
	F'(\zero) ~=~ &\frac{(-2)^x}{2 \cdot x} \\
	F'^2(\zero) ~=~ &\frac{(-2)^x}{2 \cdot x} + \frac{(-2)^{x+1}}{4 \cdot (x+1)} \\
	F'^3(\zero) ~=~ &\frac{(-2)^x}{2 \cdot x} + \frac{(-2)^{x+1}}{4 \cdot (x+1)}  + \frac{(-2)^{x+2}}{8 \cdot (x+2)} \\
	\intertext{and so on. Notice that, again, the sequence $(F'^n(\zero))_{n \in \mathbb N}$ is \emph{not} monotonically increasing, so iteratively applying $F'$ to $\zero$ does not yield an ascending chain. If we nevertheless take the limit of this sequence---again just assuming it exists---, we get}
	F'^{\omega}(\zero) ~=~ &\sum_{i=0}^{\omega} \frac{(-2)^{x + i}}{2^{i+1} \cdot (x + i)} ~.
\end{align*}
Finally, we have to calculate $\wp{\ASSIGN{x}{1}}{F'^{\omega}(\zero)}$, which gives
\begin{align*}
	 \sum_{i=0}^{\omega} \frac{(-2)^{i + 1}}{2^{i+1} \cdot (1 + i)} ~=~ -1 + \frac{1}{2} - \frac{1}{3} + \frac{1}{4} - \frac{1}{5} + \cdots~,
\end{align*}
and converges to $-\ln(2)$.

The reason that this example is nevertheless problematic is that by the well--known Riemann Series Theorem \cite{riemann}, the series $S'$ can be reordered in such a fashion that the series converges to any value in $\mathbb R \cup \{-\infty,\, +\infty\}$.
This is because the series does converge but not absolutely.
A series $\sum_{i=0}^\infty a_i$ is said to \emph{converge absolutely} if $\sum_{i=0}^\infty |a_i|$ converges.
If a series is absolutely convergent, then the series is also unconditionally convergent, meaning that the series converges to a unique value regardless of how the summands are ordered.
If, however, a series converges non--absolutely, then the Riemann Series Theorem states that its summands can always be reordered in such a way that the series converges to an arbitrary value or that it tends to $+\infty$ or that it tends to $-\infty$.

This behavior of non--absolutely convergent series under reordering is highly undesirable for expected values since the outcomes of random events are only assigned a probability, and there exists no natural ordering of the summands in which their weighted masses should be summed up to an expected value.
This is the reason why in classical probability theory the expected value $\Exp{\mu}{f}$ of a mixed--sign random variable $f$ is only defined if $\Exp{\mu}{|f|} < \infty$, because that condition is exactly what ensures absolute convergence of the series representing $\Exp{\mu}{f}$.
Next, we investigate how to incorporate the notion of absolute convergence into a new notion for mixed--sign expectations.

\subsection{Integrability--Witnessing Expectations}

\noindent
If a random variable $f$ fulfills the condition $\Exp{\mu}{|f|} < \infty$, then $f$ is called \emph{integrable}.
Our goal is to formally incorporate the bookkeeping whether $f$ is integrable or not into the objects on which a new weakest pre--expectation calculus acts in order to obtain a sound calculus for mixed--sign expectations.
The first thing on our path to this goal is to alter our expectation space to allow for random variables to evaluate to both positive and negative reals.
\begin{definition}[Mixed--Sign Expectations]
	The set $\Epm$ of mixed--sign expectations (or simply expectations) is given by
	\begin{align*}
		\Epm ~=~ \{ f ~|~ f \colon \States \To \R \}~. \tag*{$\triangle$}
	\end{align*}
\end{definition}
\noindent
Notice that we have dropped the $\infty$ element from the co-domain of an expectation, since if $f$ is integrable, then the expected value of $f$ is finite anyway.

Next, we present our integrability bookkeeping approach.
The idea for keeping track of the integrability of an expectation $f$ is to keep a pair of expectations $(f,\, g)$ where $g$ is a \emph{non--negative} expectation that bounds $|f|$.
We call such a pair an \emph{integrability--witnessing pair}.
The idea is that pre--expectations are computed for both components simultaneously.
\begin{definition}[Integrability--Witnessing Pairs]
\label{def:iwe}
	The set $\Pairs$ of integrability--witnessing pairs is defined as a set of pairs
	\begin{align*}
		\Pairs ~=~ \big\{\Pair{f}{g} ~\big|~ f \in \Epm,\, g \in \E,\, |f| \leq g \big\}~.
	\end{align*}
	We define addition of two integrability--witnessing pairs by $\Pair{f}{g} + \Pair{f'}{g'} = \Pair{f + f'}{g + g'}$, a scalar multiplication by $c \cdot \Pair{f}{g} = \Pair{c \cdot f}{|c| \cdot g}$ for $c \in \R$, and a multiplication by $h \cdot \Pair{f}{g} = \Pair{h \cdot f}{|h| \cdot g}$, for $h \in \Epm$.
	\hfill$\triangle$
\end{definition}
\noindent
Next, we would like to define an ordering on integrability--witnessing pairs.
We would like to compare pairs component{\-}wise, i.e.\ $\Pair{f}{g}$ should be less or equal $\Pair{f'}{g'}$ if both $f \leq f'$ and $g \leq g'$.
This would naturally extend the complete partial order $\leq$ on $\E$ to $\Pairs$.
There is, however, a catch:

Recall that the intuition behind a pair $\Pair{f}{g}$ is that whenever the expected value of $g$ is finite, then the expected value of $|f|$, too, is finite by monotonicity of the expected value operator.
If the expected value of $g$ is infinity, however, then the expected value of $f$ cannot be ensured to be defined.
(In particular, if $g = |f|$, then the expected value of $f$ should definitely be undefined.)
Therefore, if $g'$ is the pre--expectation of $g$ and for a state $\sigma \in \States$ we have $g'(\sigma) = \infty$, then we should not care about the pre--expectation of $f$ in state $\sigma$ since definedness cannot be ensured.
This consideration should be reflected in our order on $\Pairs$:
For states where the second component evaluates to $\infty$, the first component should not be compared. 
This gives rise to the following definition:
\begin{definition}[The Quasi--Order $\precsim$ on $\Pairs$]
	A quasi--order $(\Pairs,\, {\precsim})$ is given by 
	\belowdisplayskip=0pt
	\begin{align*}
		\Pair{f}{g} ~\precsim~ (f',\, g')
	\end{align*}
	\normalsize
	iff for all $\sigma \in \States$,
	\begin{align*}
		&g'(\sigma) \neq \infty \quad\text{implies}\quad f(\sigma) \leq f'(\sigma) ~\text{and}~ g(\sigma) \leq g'(\sigma)~. \tag*{$\triangle$}
	\end{align*}
\end{definition}
\noindent
In contrast to a partial order which is reflexive, transitive and antisymmetric, in a quasi--order the requirement of antisymmetry is dropped.
Notice that, indeed, $\precsim$ is only a quasi--order since we can have two integrability--witnessing pairs $\Pair{f}{g}$ and $\Pair{f'}{g'}$ such that for some state $\sigma \in \States$ we have $g(\sigma) = \infty = g'(\sigma)$, but $f(\sigma) \neq f'(\sigma)$, and so $\Pair{f}{g} \neq \Pair{f'}{g'}$.
Still $\Pair{f}{g}$ and $\Pair{f'}{g'}$ compare in both directions, so we have $\Pair{f}{g} \precsim \Pair{f'}{g'}$ and $\Pair{f}{g} \succsim \Pair{f'}{g'}$, but not $\Pair{f}{g} = \Pair{f'}{g'}$.
This leads us to finding that $\precsim$ is \emph{not antisymmetric}.

On the other hand, two integrability--witnessing pairs $\Pair{f}{g}$ and $\Pair{f'}{g'}$, for which $f(\sigma) \neq f'(\sigma)$ holds only for those states in which $g(\sigma) = \infty = g'(\sigma)$, should be considered \emph{equivalent}, even though they are not equal.
This is because for states $\sigma$ in which $g(\sigma) = \infty = g'(\sigma)$, the evaluations of $f(\sigma)$ and $f'(\sigma)$ are ignored since integrability is not ensured.
Consequently, we need a notion of equivalence of integrability--witnessing pairs:
\begin{definition}[Integrability--Witnessing Expectations]
	The quasi--order $\precsim$ induces a canonical \textnormal{\cite{abramsky1994domain}} 
	equivalence relation ${\approx}$, given by $\approx \, = {\precsim} \cap {\succsim}$, i.e.\ 
	\begin{align*}
		\Pair{f}{g} ~\approx~ (f',\, g') \qquad
	\end{align*}
	\normalsize
	iff for all $\sigma \in \States$,
	\begin{align*}
		&g(\sigma) ~\neq~ \infty \quad\text{or}\quad g'(\sigma) ~\neq~ \infty \\
		&\qquad\text{implies}\qquad f(\sigma) ~=~ f'(\sigma) \quad\text{and}\quad g(\sigma) ~=~ g'(\sigma)~.
	\end{align*}
	We denote by $\Lbag \Pair{f}{g} \Rbag$ or simply $\EqPair{f}{g}$ the equivalence class of $\Pair{f}{g}$ under $\approx$ and call such an equivalence class an \emph{integrability--witnessing expectation}.
	We denote by $\EqPairs$ the set of integrability--witnessing expectations, i.e.\ the set of equivalence classes of $\approx$.
	\hfill$\triangle$
\end{definition}
\noindent
Intuitively, an equivalence class $\EqPair{f}{g}$ can be thought of as a particular pair $\Pair{f}{g}$ such that $g$ maps each state either to a non--negative real number or $\infty$ and $f$ maps each state that is not mapped to $\infty$ by $g$ to a real number.

Notice that we call the equivalence classes and not the pairs ``expectations" as we consider $\EqPairs$ and not $\Pairs$ to be a suitable domain to perform computations on and thus we consider  $\EqPairs$ to be the mixed--sign counterpart to $\E$.
Next, we define a \emph{partial order on the equivalence classes}:
\begin{definition}[The Partial Order on $\EqPairs$]
	The quasi--order $\precsim$ on the set  $\Pairs$ of integrability--witnessing pairs induces a canonical \textnormal{\cite{abramsky1994domain}} partial order $\sqsubseteq$ on the set $\EqPairs$ of integrability--witnessing expectations by
	\begin{align*}
		\EqPair{f_1}{g_1} ~\sqsubseteq~ \EqPair{f_2}{g_2} \quad\text{iff}\quad \Pair{f_1}{g_1} ~\precsim~ \Pair{f_2}{g_2}~. \tag*{$\triangle$}
	\end{align*}
\end{definition}
\noindent
As for an intuitive interpretation of this partial order, we note that if $\EqPair{f_1}{g_1} \sqsubseteq \EqPair{f_2}{g_2}$ holds, then we have $f_1'(\sigma) = f_1(\sigma) \leq f_2(\sigma) = f_2'(\sigma)$ for all $\Pair{f_1'}{g_1'} \in \EqPair{f_1}{g_1}$, $\Pair{f_2'}{g_2'} \in \EqPair{f_2}{g_2}$, and all states $\sigma$ in which $g_2(\sigma) \neq \infty$ holds.
Thus if integrability in $\sigma$ is ensured, the first components compare in $\sigma$, which is the comparison we are mainly interested in.

The partial order $\sqsubseteq$ on $\EqPairs$ is complete in the sense that every \emph{non--empty} subset $D \subseteq \EqPairs$ has a supremum $\sup D = \EqPair{\hat f}{\hat g}$ given by
\begin{align*}
	\hat g(\sigma) = &\sup \big\{g(\sigma) ~\big|~ \Pair{f}{g} \in \EqPair{f}{g} \in D\big\} \\[2ex]
	\hat f(\sigma) = &\begin{cases}
		\sup \{f(\sigma) ~|~ \Pair{f}{g} \in \EqPair{f}{g} \in D\}, &\text{if $\hat g(\sigma) \neq \infty$},\\[1ex]
		0, &\text{otherwise,\footnotemark}
	\end{cases}
\end{align*}
where $\infty$ is assumed to be a valid supremum for $\hat g(\sigma)$. \footnotetext{\label{footnotelabel}Notice that this 0 is an arbitrary choice of a value in $\R$ since any $\Pair{\hat f'}{\hat g}$, where $\hat f'(\sigma) \neq 0$ for any $\sigma \in \States$ with $\hat g(\sigma) = \infty$, is in the same equivalence class as $\Pair{\hat f}{\hat g}$.} 

An unfortunate fact about the partial order $(\EqPairs,\, \sqsubseteq)$ is that it has \emph{no least element}.
In particular $\EqPair{\ctert{0}}{\ctert{0}}$ is not a least element of $\EqPairs$ since, for example $\EqPair{\ctert{0}}{\ctert{0}} ~\not\sqsubseteq~ \EqPair{\ctert{-1}}{\ctert{1}}$, where $\ctert{-1} = \lambda \sigma \mydot {-}1$ and $\ctert{1} = \lambda \sigma \mydot 1$.
This fact prevents us from applying the Kleene Fixed Point Theorem---as is typically done in $\wpsymbol$--calculi---in our later development.

In the next section, we investigate a weakest pre--expectation calculus acting on integrability--witnessing expectations.

\section{Mixed--Sign Weakest Pre--Expectations}

%

\noindent
We now develop a weakest pre--expectation calculus acting on integrability--witnessing expectations.
For that we first observe that certain operations on an integrability--witnessing pair $\Pair{f}{g}$ preserve $\approx$--equivalence and thus lifting this operation to the integrability--witnessing expectation $\EqPair{f}{g}$ can be done by performing the operation on the representative $\Pair{f}{g}$ and then taking the equivalence class of the resulting pair.

E.g., the assignment $\ASSIGN{x}{E}$ preserves $\approx$--equivalence, since if $\Pair{f}{g} \approx \Pair{f'}{g'}$ then for all $\sigma \in \States$ we have
\begin{align*}
		&g(\sigma) \neq \infty \quad\text{or}\quad \infty \neq g'(\sigma)\\
		&\qquad\textnormal{implies}\qquad f(\sigma) = f'(\sigma) ~\quad\textnormal{and}\quad~ g(\sigma) = g'(\sigma)~.
\end{align*}
But then this is in particular true for all updated states of the form $\sigma[x \mapsto \sigma(E)]$ and thus $\approx$--equivalence is preserved by the assignment, i.e.\ 
\begin{align*}
	&\Pair{f}{g} ~\approx~ \Pair{f'}{g'}\\
	&\quad\text{implies}\quad \Pair{f\subst{x}{E}}{g\subst{x}{E}} ~\approx~ \Pair{f'\subst{x}{E}}{g'\subst{x}{E}}.
\end{align*}
Moreover this allows for defining a transformer
\begin{align*}
	\wpeqp{\ASSIGN{x}{E}}{f}{g} ~=~ \EqPair{f\subst{x}{E}}{g\subst{x}{E}}~.
\end{align*}
Furthermore, one can show that addition, scalar multiplication, and multiplication also preserve $\approx$--equivalence.
This puts us in a position to formally define a weakest pre--expectation transformer acting on $\EqPairs$:
\begin{definition}[The Transformer $\widetilde{\wpsymbol}$]
	The transformer $\widetilde{\wpsymbol}[C] \colon \EqPairs \To \EqPairs$ is defined by induction on the structure of $C$ according to \autoref{table:wp-pairs}.
	\hfill$\triangle$
\end{definition}
\begin{table}
\caption{\vspace{.25em}Definitions for the $\widetilde{\wpsymbol}$ transformer acting on $\EqPairs$. 
}
\label{table:wp-pairs}
\vspace{-1\baselineskip}
\begin{center}
\renewcommand{\arraystretch}{1.5}
\begin{tabular}{l@{\quad}l}
\hline
$\boldsymbol{C}$ & $\boldsymbol{\widetilde{\boldwpsymbol}[C]\Lbag f,\, g\Rbag}$\\
\hline\\[-4ex]
%
$\SKIP$ & $\EqPair{f}{g}$\\
%
%
%
$\ASSIGN x E$ & $\EqPair{f \subst{x}{E}}{g \subst{x}{E}}$\\
$\COMPOSE{C_1}{C_2}$ & $\widetilde{\wpsymbol}[C_1]\big(\wpeqp{C_2}{f}{g}\big)$\\
$\ITE{\pguard}{C_1}{C_2}$ & $\eval{\pguard} \cdot \wpeqp{C_1}{f}{g}  + \eval{\neg \pguard} \cdot \wpeqp{C_2}{f}{g}$\\
$\WHILEDO{\pguard}{C'}$ & $\displaystyle\lim_{n \To \omega} ~~\charwpn{\pguard}{C'}{\EqPair{f}{g}}{n}\EqPair{\zero}{\zero}$\\[2ex]
\hline
\multicolumn{2}{c}{$\vphantom{\Bigg(}\charwp{\pguard}{C'}{\EqPair{f}{g}}\EqPair{X}{Y} ~=~ \eval{\neg \pguard} \cdot \EqPair{f}{g}  + \eval{\pguard} \cdot \wpeqp{C}{X}{Y}$}\\
\hline
\end{tabular}
\end{center}
\end{table}
\noindent
Let us briefly go over these definitions:
Just like $\wpsymbol[\SKIP]$, $\wpeqpsymbol[\SKIP]$ is an identity since $\SKIP$ does not modify the program state. 
$\wpeqp{\ASSIGN{x}{E}}{f}{g}$ takes a representative $\Pair{f}{g} \in \EqPair{f}{g}$, performs the assignment $\ASSIGN{x}{E}$ on both components to obtain $\Pair{f \subst{x}{E}}{g \subst{x}{E}}$ and then returns the corresponding equivalence class $\EqPair{f \subst{x}{E}}{g \subst{x}{E}}$.
As described earlier, assignments preserve $\approx$--equivalence, so doing the update on the representative is a sound and sufficient course of action.

$\wpeqp{\COMPOSE{C_1}{C_2}}{f}{g}$ obtains a pre--expectation for $\COMPOSE{C_1}{C_2}$ by applying $\wpeqpsymbol[C_1]$ to the intermediate integrability--witnessing expectation obtained from $\wpeqp{C_2}{f}{g}$.
$\wpeqp{\ITE{\pguard}{C_1}{C_2}}{f}{g}$ weights $\wpeqp{C_1}{f}{g}$ and $\wpeqp{C_2}{f}{g}$ according to the probability of the guard $\pguard$ evaluating to \true and \false by multiplication and addition on integrability--witnessing expectations (see \mbox{Definition~\ref{def:iwe}}).

Before we turn our attention to the definitions of $\wpeqpsymbol$ for while--loops, let us illustrate the effects of the $\wpeqpsymbol$ transformer by means of a variation of Example~\ref{example:trunc}:
\begin{example}[Truncated Alternating Geometric Distribution] 
  \label{example:alttrunc}
  Consider the probabilistic program $C_\mathit{alttrunc}$:
  \begin{align*}
C_\mathit{alttrunc}\boldsymbol{\colon} \;\; 
 &\If~\bigl(\nicefrac 1 2 \bigr)~\{ \SKIP \}~\Else~ \{\\
 &\qquad\COMPOSE{\ASSIGN{x}{-x - 1}}{}\\
&\qquad \If~\bigl(\nicefrac 1 2 \bigr)~\{ \SKIP \}~ \Else~\{ \ASSIGN{x}{-x + 1} \}\}
\end{align*}
  It is a variant of $C_\mathit{trunc}$ from Example~\ref{example:trunc} where the program alternates the sign of $x$ and also alternates the sign of the change in $x$.
  Suppose we want to know the expected value of $x$ after termination of $C_\mathit{alttrunc}$.
  The according integrability--witnessing post--expectation for obtaining an answer to this question is $\EqPair{x}{|x|}$.
  Notice that in this example, the need for mixed--sign random variables arises \emph{not} from some artificially constructed mixed--sign post--expectation but \emph{directly from the program code}.
In order to reason about the expected value of $x$ after termination, we calculate $\wpeqp{C_\mathit{alttrunc}}{x}{|x|}$:
\begin{align*}
&\wpeqp{\stmt_\mathit{alttrunc}}{x}{|x|}\\
&=~\frac{1}{2} \cdot \wpeqp{\SKIP}{x}{|x|} + \frac{1}{2} \cdot \wpeqpsymbol[\ldots;~\ldots]\EqPair{x}{|x|}\\
&=~	\frac{1}{2} \cdot \EqPair{x}{|x|}  + \frac{1}{2} \cdot \wpeqpsymbol[\ldots] \left( \frac{1}{2} \cdot \wpeqp{\SKIP}{x}{|x|} \right. \\[0em]
	&\qquad\qquad\qquad\qquad\qquad \left. {}+ \frac{1}{2} \cdot \wpeqp{\ASSIGN{x}{-x+1}}{x}{|x|} \right) \\
&=~ \bigEqPair{\frac{x}{2}}{\frac{|x|}{2}}  + \frac{1}{2} \cdot \wpeqpsymbol[\ASSIGN{x}{-x - 1}] \left( \frac{1}{2} \cdot \EqPair{x}{|x|} \right.\\[0em]
&\qquad \qquad\qquad \qquad\qquad \qquad \left. {}+ \frac{1}{2} \cdot \EqPair{-x + 1}{|{-}x + 1|} \right) \\
&=~	\bigEqPair{\frac{x}{2}}{\frac{|x|}{2}}  + \bigEqPair{\frac{-x-1}{4}}{\frac{|x+1|}{4}}\\
&\qquad{}  + \bigEqPair{\frac{x + 2}{4}}{\frac{|x + 2|}{4}} \\
&=~	\bigEqPair{\frac{x}{2} + \frac{1}{4}}{\frac{2\cdot |x| + |x +1| +|x+2|}{4}}
\end{align*}
The first observation we can make from this result is that the expected value of $x$ is defined after execution of $\stmt_\mathit{alttrunc}$, since in every initial state we have $\nicefrac{2\cdot |x| + |x +1| +|x+2|}{4} < \infty$.
The second observation we can make is that this expected value is for every initial state given by $\nicefrac{x}{2} + \frac{1}{4}$, which is to be evaluated in the initial state in which $\stmt_\mathit{alttrunc}$ is started.
In particular, the above expression gives the correct expected value, \emph{regardless} of whether the program is started with a positive or negative variable valuation for $x$.
\hfill$\triangle$
\end{example}
\noindent
We now turn towards weakest pre--expectations of \emph{while--loops}.
While the calculation of $\wpeqpsymbol$ in the above example was straightforward as the program $\stmt_\mathit{alttrunc}$ is loop--free, $\wpeqpsymbol$ of while--loops is defined using a limit construct.
For that we first need to formally define what a limit of a sequence of integrability--witnessing expectations, i.e.\ a limit of a sequence of equivalence classes, is.
\begin{definition}[Limits of Sequences in $\EqPairs$]
	Let $\left( \EqPair{f_n}{g_n} \right)_{n \in \mathbb N} \subseteq \EqPairs$ be a sequence in $\EqPairs$. 
	Then 
	\begin{align*}
		\EqPair{f}{g} \quad\text{is a limit of}\quad \left( \EqPair{f_n}{g_n} \right)_{n \in \mathbb N}~,
	\end{align*}
	if there exists a sequence $(\Pair{f_{n}'}{g_{n}'})_{n \in \mathbb N}$ of representatives, i.e.\ for all $n \in \mathbb N$, $\Pair{f_n'}{g_n'} \in \EqPair{f_n}{g_n}$, with
	\begin{align*}
		f(\sigma) ~=~ &\begin{cases}
			\displaystyle \lim_{n \To \omega} f_n'(\sigma)~, &\text{if } \displaystyle\lim_{n \To \omega} g_n'(\sigma) \neq \infty~,\\[2ex]
			0~, &\text{otherwise,\footnotemark{} and}
		\end{cases}\\[1em]
		g(\sigma) ~=~ &\lim_{n \To \omega} g_n'(\sigma)~,
	\end{align*}
	where $\infty$ is assumed to be a valid limit for $g'_{n}(\sigma)$.
	\hfill$\triangle$
\end{definition}
\noindent
The\footnotetext{Notice that this 0 is again an arbitrary choice of a value in $\R$ since any $\Pair{f'}{g}$, where $f'(\sigma) \neq 0$ for any $\sigma \in \States$ with $g(\sigma) = \infty$, is in the same equivalence class as $\Pair{f}{g}$. See also Footnote \ref{footnotelabel}.} intuition behind this definition is that a limit of a sequence in $\EqPairs$ is a pointwise limit (in each state $\sigma \in \States$).

If a limit exists, we note the following:
For each pair in any equivalence class, the second component is unique.
Thus the sequence $(g_n')_{n \in \mathbb N}$ is uniquely determined by $(g_n)_{n \in \mathbb N}$.
	
	Now, if $\lim_{n \To \infty} g_n(\sigma) = \infty$, then the limit in that state $\sigma$ does not depend on the sequence $(f_n')_{n \in \mathbb N}$ and is uniquely determined.
	If on the other hand $\lim_{n \To \infty} g_n(\sigma) \neq \infty$, then for almost all $g_i$ we have $g_i(\sigma) \neq \infty$ and thus also almost all $f_i'$ are uniquely determined by $f_i$.	
All in all this leads to the fact that if a limit of $\left( \EqPair{f_n}{g_n} \right)_{n \in \mathbb N}$ exists, then we can reason about the existence by means of the sequence of representatives $(\Pair{f_{n}}{g_{n}})_{n \in \mathbb N}$.

The $\wpeqpsymbol$--semantics of while--loops is defined as the limit of a sequence of integrability--witnessing expectations, but in order to speak of \emph{the} limit, such limits must be unique if they exist.
This is ensured by the following theorem:
\begin{theorem}[Uniqueness of Limits in $\EqPairs$]
	Let $\left( \EqPair{f_n}{g_n} \right)_{n \in \mathbb N} \subseteq \EqPairs$ and let a limit of that sequence exist.
	Then that limit is unique, i.e.\ if $\EqPair{f}{g}$ and $\EqPair{f'}{g'}$ are both a limit of $\EqPair{f_n}{g_n}$, then $\EqPair{f}{g} = \EqPair{f'}{g'}$.
\end{theorem}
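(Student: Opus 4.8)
The plan is to reduce the uniqueness statement to two structural facts about $\approx$: that the second component of an integrability--witnessing pair is pinned down by its equivalence class, and that the first component is pinned down in exactly those states where the second component is finite. Both are immediate from the definition of $\approx$, and together they force any two limits to coincide as equivalence classes.

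First I would record a normalisation of representatives. Suppose $\Pair{f_1}{g_1} \approx \Pair{f_2}{g_2}$. Whenever $g_1(\sigma) \neq \infty$, the defining implication of $\approx$ forces $g_1(\sigma) = g_2(\sigma)$, and symmetrically; hence $g_1(\sigma) = \infty$ iff $g_2(\sigma) = \infty$, and the two agree off the $\infty$--set. Consequently $g_1 = g_2$ \emph{everywhere}, so each class $\EqPair{f_n}{g_n}$ has a \emph{uniquely determined} second component, which I keep denoting $g_n$. In particular, any representative sequence witnessing a limit must use exactly this $g_n$ as its second components, i.e.\ $g_n' = g_n$.

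Now let $\EqPair{f}{g}$ and $\EqPair{f'}{g'}$ both be limits of $(\EqPair{f_n}{g_n})_{n \in \mathbb N}$, witnessed by representative sequences $(\Pair{f_n'}{g_n})$ and $(\Pair{f_n''}{g_n})$ (both with second component $g_n$, by the previous step). The limit definition immediately gives $g = \lim_{n \To \omega} g_n = g'$ pointwise, so the two candidate limits already agree in their second components. It remains to compare first components in the states that matter, namely those $\sigma$ with $g(\sigma) \neq \infty$. Fix such a $\sigma$. Since $\lim_{n \To \omega} g_n(\sigma) = g(\sigma) \neq \infty$, we have $g_n(\sigma) \neq \infty$ for all sufficiently large $n$. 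For each such $n$, the pairs $\Pair{f_n'}{g_n}$ and $\Pair{f_n''}{g_n}$ lie in the same class $\EqPair{f_n}{g_n}$, so $\Pair{f_n'}{g_n} \approx \Pair{f_n''}{g_n}$, and the defining implication of $\approx$ evaluated at $\sigma$ (where $g_n(\sigma) \neq \infty$) yields $f_n'(\sigma) = f_n''(\sigma)$. Passing to the limit over these coinciding tails gives $f(\sigma) = \lim_{n \To \omega} f_n'(\sigma) = \lim_{n \To \omega} f_n''(\sigma) = f'(\sigma)$.

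Putting the pieces together, $g = g'$ everywhere and $f(\sigma) = f'(\sigma)$ whenever $g(\sigma) \neq \infty$, which is precisely the assertion $\Pair{f}{g} \approx \Pair{f'}{g'}$, i.e.\ $\EqPair{f}{g} = \EqPair{f'}{g'}$. The one genuinely delicate point I would state with care is the passage from $\lim_{n \To \omega} g_n(\sigma) \neq \infty$ to eventual finiteness of $g_n(\sigma)$: if $g_n(\sigma) = \infty$ held infinitely often then the limit in $\Rposinf$ could only be $\infty$, so a finite limit forces a common finite tail, and it is exactly this tail that licenses comparing the first components through $\approx$. This is also why the arbitrary value assigned to $f$ (resp.\ $f'$) on the $\infty$--states of $g$ is harmless, since those states are precisely the ones on which $\approx$ imposes no constraint.
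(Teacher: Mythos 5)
Your proposal is correct and follows essentially the same route as the paper: both arguments reduce uniqueness to the uniqueness of pointwise limits of (extended) real sequences, using the fact that representatives $\Pair{f_n}{g_n}$ are essentially determined by their classes ($g_n$ uniquely, $f_n$ on the states where $g_n$ is finite). The only differences are presentational—the paper argues by contradiction and delegates the normalisation of representatives to the discussion preceding the theorem, whereas you prove it directly and spell out the eventual-finiteness tail argument, which the paper leaves implicit.
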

\begin{proof}
	Suppose for a contradiction that $\EqPair{f}{g} \neq \EqPair{f'}{g'}$ are both a limit of the sequence $\left( \EqPair{f_n}{g_n} \right)_{n \in \mathbb N} \subseteq \EqPairs$.	
	Recall that we can reason about such a limit entirely by the sequence of representatives $(\Pair{f_{n}}{g_{n}})_{n \in \mathbb N}$.
	Because of $\EqPair{f}{g} \neq \EqPair{f'}{g'}$ we have $\Pair{f}{g} \not\approx \Pair{f'}{g'}$.
	Hence, there must exist a state $\sigma$ such that
	\begin{align*}
		&g(\sigma) \neq \infty \quad\text{or}\quad g'(\sigma) \neq \infty\\
		& \qquad\text{and}\qquad g(\sigma) \neq g'(\sigma) \quad\text{or}\quad f(\sigma) \neq f'(\sigma)~.
	\end{align*}
	But if that were the case, then for that state $\sigma$ either
	\begin{align*}
		\lim_{n \To \omega} g_n(\sigma) ~=~ g(\sigma) ~&\neq~ g'(\sigma) ~=~ \lim_{n \To \omega} g_n(\sigma),\quad\text{or}\\[1ex]
		\lim_{n \To \omega} f_n(\sigma) ~=~ f(\sigma) ~&\neq~ f'(\sigma) ~=~ \lim_{n \To \omega} f_n(\sigma)
	\end{align*}
	should hold, both of which is a contradiction to the fact that limits of real numbers are unique if they exist.
	Therefore, the assumption $\EqPair{f}{g} \neq \EqPair{f'}{g'}$ cannot be true and the limit of $\left( \EqPair{f_n}{g_n} \right)_{n \in \mathbb N}$ must be unique.
\end{proof}
\noindent
Due to the limit's uniqueness, we are now in a position to write
\begin{align*}
	\lim_{n \To \omega} \EqPair{f_n}{g_n} ~=~ \EqPair{f}{g}~,
\end{align*}
if a limit exists and $\EqPair{f}{g}$ is \emph{the} limit of $\lim_{n \To \omega} \EqPair{f_n}{g_n}$.

Using the limit construct, the $\wpeqpsymbol$ of $\WHILEDO{\pguard}{C'}$ is defined as the limit of iteratively applying the characteristic functional of $\WHILEDO{\xi}{C'}$, given by
\begin{align*}
	\charwp{\pguard}{C'}{\EqPair{f}{g}}\EqPair{X}{Y} ~=~ &\eval{\neg \pguard} \cdot \EqPair{f}{g}  + \eval{\pguard} \cdot \wpeqp{C'}{X}{Y}~,
\end{align*}
to $\EqPair{\zero}{\zero}$. 
Formally, we have defined in \autoref{table:wp-pairs}
\begin{align*}
	\wpeqp{\WHILEDO{\xi}{C'}}{f}{g} ~=~ \lim_{n \To \omega} ~~\charwpn{\pguard}{C'}{\EqPair{f}{g}}{n}\EqPair{\zero}{\zero}~,
\end{align*}
where $\charwpn{\pguard}{C'}{\EqPair{f}{g}}{n}$ denotes the $n$-fold application of $\charwp{\pguard}{C'}{\EqPair{f}{g}}$ to its argument.
This is somewhat similar to the $\wpsymbol$--semantics for non--negative expectations, where we basically have $\wp{\WHILEDO{\xi}{C'}}{f} = \lim_{n \To \omega} ~~\charwpn{\pguard}{C'}{f}{n}(\zero)$, since the Kleene Fixed Point Theorem gives
\begin{align*}
	&\wp{\WHILEDO{\xi}{C'}}{f} \\
	&~=~ \lfp \charwp{\pguard}{C'}{f} \\
	& ~=~ \sup_n \charwpn{\pguard}{C'}{f}{n}(\zero) \tag{Kleene Fixed Point Theorem} \\
	& ~=~  \lim_{n \To \omega} ~~\charwpn{\pguard}{C'}{f}{n}(\zero)~, \tag*{$\left(\text{$\charwpn{\pguard}{C'}{f}{n}(\zero)$ increases monot.\ in $n$}\right)$}
\end{align*}
and ensures existence of this limit.
This, however, works only because $\zero$ is the least element in the complete partial order $(\E,\, {\leq})$ and because of the monotonicity of $\charwp{\pguard}{C'}{f}$ (which follows from continuity) we automatically obtain an ascending chain $\zero \leq \charwp{\pguard}{C'}{f}(\zero) \leq \charwpn{\pguard}{C'}{f}{2}(\zero) \leq \charwpn{\pguard}{C'}{f}{3}(\zero) \leq \cdots$, for which a supremum exists by completeness of the underlying partial order.

In contrast to that, $\EqPair{\zero}{\zero}$ is not the least element in the partial order $(\EqPairs,\, {\sqsubseteq})$ and therefore, the sequence
\begin{align*}
	\left( \charwpn{\pguard}{C'}{\EqPair{f}{g}}{n} \right)_{n \in \mathbb N}\EqPair{\zero}{\zero}
\end{align*}
is not necessarily an ascending chain.
It is because of that, that the Kleene Fixed Point Theorem fails in the context of integrability--witnessing expectations.
We have to ensure the existence of the limit defining the semantics of while--loops by other means.
Obviously, it is desired that this limit always exists in order for $\wpeqpsymbol$ to be a well--defined semantics for \emph{all possible programs} together with \emph{all possible post--expectations}, and indeed, we can establish the following result:
\begin{theorem}[Well--Definedness of $\widetilde{\wpsymbol}$ for While--Loops]
\label{thm:while-well-def}
	Let $\xi \in \DExprs$, $C' \in \pProgs$, and $\EqPair{f}{g} \in \EqPairs$.
	Then the limit
	\begin{align*}
		\wpeqp{\WHILEDO{\xi}{C'}}{f}{g} ~=~ \lim_{n \To \omega} ~~\charwpn{\pguard}{C'}{\EqPair{f}{g}}{n}\EqPair{\zero}{\zero}
	\end{align*}
	exists and hence the $\widetilde{\wpsymbol}$--semantics of any while--loop with respect to any post--expectation is well--defined.
\end{theorem}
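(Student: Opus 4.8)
The plan is to reduce the existence of this limit to two independent instances of the \emph{non--negative} $\wpsymbol$--iteration, for which the Kleene Fixed Point Theorem already furnishes convergent ascending chains. Abbreviate the iterates in question by $\EqPair{f_n}{g_n} = \charwpn{\pguard}{C'}{\EqPair{f}{g}}{n}\EqPair{\zero}{\zero}$ and fix their canonical representatives $\Pair{f_n}{g_n}$; recall that the second components are uniquely determined and that, by the discussion preceding the theorem, the limit in $\EqPairs$ exists if and only if $\lim_{n \To \omega} g_n(\sigma)$ exists in $\Rposinf$ for every state $\sigma$ and, for every $\sigma$ with $\lim_{n \To \omega} g_n(\sigma) \neq \infty$, the limit $\lim_{n \To \omega} f_n(\sigma)$ exists in $\R$. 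The workhorse of the argument is the change of coordinates $u = g + f$ and $v = g - f$; since every integrability--witnessing pair satisfies $|f| \leq g$, both $g + f$ and $g - f$ are genuine non--negative expectations in $\E$.

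First I would establish, by structural induction on $C$, the following \emph{coordinate--decomposition lemma}: writing $\wpeqp{C}{X}{Y} = \EqPair{p}{q}$, the second component satisfies $q = \wp{C}{Y}$ everywhere, and on every state $\sigma$ with $q(\sigma) \neq \infty$ the first component satisfies $p(\sigma) = \tfrac{1}{2}\big(\wp{C}{Y+X}(\sigma) - \wp{C}{Y-X}(\sigma)\big)$. The reason is that all operations used by $\wpeqpsymbol$ act on second components exactly as the corresponding $\E$--operations of \autoref{table:wp-rules}---the factors $\eval{\pguard}$ and $\eval{\neg\pguard}$ are non--negative, so the absolute values in Definition~\ref{def:iwe} are inert---and in the $u,v$--coordinates everything becomes non--negative--linear, so $u$ and $v$ propagate independently as ordinary non--negative expectations. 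The \SKIP, assignment, composition and conditional cases are routine; the inner while--loop case uses the induction hypothesis (well--definedness for the structurally smaller loop), and restricting the first--component identity to the states where $q \neq \infty$ is exactly what the equivalence--class bookkeeping of $\EqPairs$ licenses.

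Applying the lemma to the characteristic functional decouples the iteration. Setting
\begin{align*}
	u_n^{\star} ~=~ \charwpn{\pguard}{C'}{g+f}{n}(\zero) \qquad\text{and}\qquad v_n^{\star} ~=~ \charwpn{\pguard}{C'}{g-f}{n}(\zero)
\end{align*}
for the \emph{standard} non--negative iterates, linearity (Theorem~\ref{thm:wp-prop}(3)) yields $g_n = \tfrac{1}{2}(u_n^{\star} + v_n^{\star})$ on all states, and the lemma together with a routine induction yields $f_n(\sigma) = \tfrac{1}{2}(u_n^{\star}(\sigma) - v_n^{\star}(\sigma))$ on every state $\sigma$ in the integrable set $T = \{\sigma \mid \sup_m g_m(\sigma) \neq \infty\}$. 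By continuity, hence monotonicity, of the standard characteristic functional (Theorem~\ref{thm:wp-prop}(1)), both $(u_n^{\star})_{n \in \mathbb N}$ and $(v_n^{\star})_{n \in \mathbb N}$ are ascending chains in $(\E,\, \leq)$, so their pointwise suprema $u_\infty^{\star} = \sup_n u_n^{\star}$ and $v_\infty^{\star} = \sup_n v_n^{\star}$ exist in $\Rposinf$.

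It remains to reassemble. For every $\sigma$ the limit $\lim_{n \To \omega} g_n(\sigma) = \tfrac{1}{2}\big(u_\infty^{\star}(\sigma) + v_\infty^{\star}(\sigma)\big)$ exists in $\Rposinf$. If $\sigma \notin T$ the limit definition imposes no constraint on the first component; if $\sigma \in T$, then---both summands being non--negative---$u_\infty^{\star}(\sigma)$ and $v_\infty^{\star}(\sigma)$ are finite, whence $\lim_{n \To \omega} f_n(\sigma) = \tfrac{1}{2}\big(u_\infty^{\star}(\sigma) - v_\infty^{\star}(\sigma)\big)$ exists in $\R$. Both requirements of the limit definition are thus met and the limit exists, as claimed. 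I expect the main obstacle to be the coordinate--decomposition lemma: one must verify that the $u,v$--substitution genuinely commutes with each constituent operation of $\wpeqpsymbol$, push the structural induction through while--loops nested inside $C'$, and---most delicately---confine the first--component identity to the integrable states, which is precisely where the $\infty$--valued bookkeeping of integrability--witnessing expectations must be handled with care.
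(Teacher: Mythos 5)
Your proposal is correct, and it follows in essence the same route as the paper's own proof: both arguments hinge on a decomposition lemma---proved by induction on loop--nesting depth and on $n$---expressing the iterates $\charwpn{\pguard}{C'}{\EqPair{f}{g}}{n}\EqPair{\zero}{\zero}$ through \emph{standard} non--negative $\wpsymbol$--iterations, and both then conclude by monotone convergence: the standard iterations form ascending Kleene chains, and on every state where the $g$--iteration stays finite the remaining chains are bounded (by monotonicity and linearity of $\wpsymbol$), hence convergent, so their difference converges. The only genuine difference is the basis of the decomposition. The paper writes the $n$-th iterate as
\begin{align*}
	\bigEqPair{\charwpn{\pguard}{C'}{|f|+f}{n}(\zero) - \charwpn{\pguard}{C'}{|f|}{n}(\zero)}{\charwpn{\pguard}{C'}{g}{n}(\zero)}~,
\end{align*}
mirroring the classical proof that absolute convergence implies convergence, whereas you use the symmetric coordinates $u = g+f$, $v = g-f$, recovering the second component as $\tfrac{1}{2}\left(u^\star_n + v^\star_n\right)$ everywhere and the first as $\tfrac{1}{2}\left(u^\star_n - v^\star_n\right)$ on the integrable states, with $u^\star_n = \charwpn{\pguard}{C'}{g+f}{n}(\zero)$ and $v^\star_n = \charwpn{\pguard}{C'}{g-f}{n}(\zero)$. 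Since $g+f = (g-|f|) + (|f|+f)$ and $g-f = (g-|f|) + (|f|-f)$ with all summands in $\E$, linearity (Theorem~\ref{thm:wp-prop}~(3)) makes the two decompositions interchangeable on states where the iterates are finite, so the proofs are essentially isomorphic. What each choice buys: yours is marginally more economical (two ascending chains instead of three, the second component coming for free as half their sum), while the paper's choice of $|f|+f$ and $|f|$ is precisely the form that gets reused afterwards as Lemma~\ref{lem:important} and feeds the soundness theorem via the Jordan decomposition $f = \positive{f} - \negative{f}$. Finally, the delicate point you flag---pushing the first--component identity through sequential composition and nested loops while confining it to states where the second component is finite---is equally present in the paper's argument and is likewise left at sketch level there (``The idea of this proof is to show by induction\ldots''), so your proposal does not fall short of the paper's own standard of rigor on that point.
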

The core idea for proving this theorem is adopted from a well--known proof proving that every absolutely convergent series is also convergent.
Let us go over this proof:
If a series $S_{a_i} = \sum_{i=0}^\infty a_i$ converges absolutely this means that $S_{|a_i|} = \sum_{i=0}^\infty |a_i|$ converges to some value $a$, which implies that it does so unconditionally and monotonically since all summands are positive.
This, in turn, implies that $S_{2{\cdot}|a_i|} = \sum_{i=0}^\infty 2{\cdot}|a_i|$ converges unconditionally and monotonically to $2{\cdot}a$.
Since $0 \leq a_i + |a_i| \leq 2 \cdot |a_i|$ holds, we obtain
\begin{align*}
	0 ~\leq~ \sum_{i=0}^\infty |a_i| + a_i ~\leq~ \sum_{i=0}^\infty 2 \cdot |a_i| ~=~ 2 \cdot a~.
\end{align*}
By that we can see that the series $S_{|a_i| + a_i} = \sum_{i=0}^\infty |a_i| + a_i$ is bounded.
Furthermore, since $|a_i| + a_i$ must be positive, $S_{|a_i| + a_i}$ is monotonically increasing and therefore $S_{|a_i| + a_i} = \sum_{i=0}^\infty |a_i| + a_i$ converges unconditionally.
Now, since $S_{a_i}$ is given as the difference of two unconditionally convergent series, namely $S_{a_i} = S_{|a_i| + a_i} - S_{|a_i|}$ the series $S_{a_i}$ must also converge.
This basic idea of ``express $\sum a_i$ as $\sum |a_i| + a_i - \sum |a_i|$" in case that these latter two sums exist, is the underlying principle of the following proof.
\begin{proof}[Proof of Theorem~\ref{thm:while-well-def}]
	The idea of this proof is to show by induction on the nesting depth of while--loops and by induction on $n$ that
	\begin{align*}
		\charwpn{\pguard}{C'}{\EqPair{f}{g}}{n}\EqPair{\zero}{\zero} ~=~ \bigEqPair{\charwpn{\pguard}{C'}{|f| + f}{n}(\zero) - \charwpn{\pguard}{C'}{|f|}{n}(\zero)}{\charwpn{\pguard}{C'}{g}{n}(\zero)}
	\end{align*}
	holds for all $n$ and any $C'$.
	It is then left to show that the limit of the above exists for $n \rightarrow \omega$.
	We can see that the second component of that sequence increases monotonically towards
	\begin{align*}
		\sup_{n \in \mathbb N} \charwpn{\pguard}{C'}{g}{n}(\zero) ~=~ \wp{\WHILEDO{\pguard}{C'}}{g}~.
	\end{align*}
	Then for any state $\sigma$ for which $\wp{\WHILEDO{\pguard}{C'}}{g}(\sigma) < \infty$ holds, we have
	\begin{align*}
		&\sup_{n \in \mathbb N} \charwpn{\pguard}{C'}{|f| + f}{n}(\zero)(\sigma)\\
		&~=~ \wp{\WHILEDO{\pguard}{C'}}{|f| + f}(\sigma) \\
		& ~\leq~ \wp{\WHILEDO{\pguard}{C'}}{2 \cdot |f|}(\sigma) \tag{$\wpsymbol$ monotonic}\\
		& ~\leq~ \wp{\WHILEDO{\pguard}{C'}}{2 \cdot g}(\sigma) \tag{$\wpsymbol$ monotonic} \\
		& ~\leq~ 2 \cdot \wp{\WHILEDO{\pguard}{C'}}{g}(\sigma) \tag{$\wpsymbol$ linear} \\
		& ~<~ 2 \cdot \infty ~=~ \infty~,\quad\text{and}\\[1em]
		&\sup_{n \in \mathbb N} \charwpn{\pguard}{C'}{|f|}{n}(\zero)(\sigma)\\
		& ~=~ \wp{\WHILEDO{\pguard}{C'}}{|f|}(\sigma)\\
		& ~\leq~ \wp{\WHILEDO{\pguard}{C'}}{g}(\sigma) \tag{$\wpsymbol$ monotonic} \\
		& ~<~ \infty~.
	\end{align*}
	Hence, the limit for both $\charwpn{\pguard}{C'}{|f| + f}{n}(\zero)(\sigma)$ and $\charwpn{\pguard}{C'}{|f|}{n}(\zero)(\sigma)$ exists, thus also the limit for $\charwpn{\pguard}{C'}{|f| + f}{n}(\zero)(\sigma) - \charwpn{\pguard}{C'}{|f|}{n}(\zero)(\sigma)$ exists, and therefore $\lim_{n \To \omega} \charwpn{\pguard}{C'}{\EqPair{f}{g}}{n}\EqPair{\zero}{\zero}$
	exists, too.
\end{proof}
\noindent
Let us revisit the two examples we presented in Section~\ref{sec:issues}, i.e.\ the program
\begin{align*}
	C_\mathit{geo}\boldsymbol{\colon}\quad&\COMPOSE{\ASSIGN{x}{1}}{}\WHILE (\nicefrac{1}{2})\{\ASSIGN{x}{x+1}\}~,
\end{align*}
together with post--expectations $f = 2^x$ and $f' = (-2)^x$, respectively.
In the $\wpeqpsymbol$ calculus, the respective pre--expectations are well--defined, namely
\begin{align*}
	&\wpeqp{C_\mathit{geo}}{2^{x}}{|2^x|} ~=~ \EqPair{\zero}{\infty}\\
	&\qquad\text{and} \qquad \wpeqp{C_\mathit{geo}}{(-2)^{x}}{|(-2)^x|} ~=~ \EqPair{\zero}{\infty}~.
\end{align*}
So the pre--expectations of these two examples are perfectly well--defined and therefore these examples are not at all pathological in our presented calculus.

\section{Properties of the $\wpeqpsymbol$--Transformer}

\subsection{Monotonicity}

\noindent
Perhaps the single most important property of the $\wpeqpsymbol$--transformer is monotonicity, as that is what enables compositional reasoning.
Monotonicity is as vital to our calculus as the consequence rule is to Hoare logic.
For instance, it enables to continue reasoning soundly using over--approximations obtained by invariant rules.
Our transformer enjoys this property:
\begin{theorem}[Monotonicity of $\widetilde{\wpsymbol}$]
\label{thm:wp-mixed-mon}
	$\widetilde{\wpsymbol}$ is \emph{monotonic} with respect to $\sqsubseteq$, i.e.\ for all $C \in \pProgs$ and $\EqPair{f}{g}, \EqPair{f'}{g'} \in \EqPairs$,
	\begin{align*}
		&\EqPair{f}{g} ~\sqsubseteq~ \EqPair{f'}{g'}\\
		&\qquad\text{implies}\qquad \wpeqp{C}{f}{g} ~\sqsubseteq~ \wpeqp{C}{f'}{g'}~.
	\end{align*}
\end{theorem}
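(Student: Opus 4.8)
The plan is to proceed by structural induction on $C \in \pProgs$, mirroring the inductive definition of $\wpeqpsymbol$ in \autoref{table:wp-pairs}. The base cases and the non-looping inductive cases should be routine, reducing to unfolding the definition of $\sqsubseteq$ (that is, of $\precsim$ on representatives) and checking the implication stated there for each program state $\sigma$. The genuinely hard case will be the while-loop, where $\wpeqpsymbol$ is defined via a limit rather than a least fixed point, so the usual monotonicity-of-lfp argument is unavailable.

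First I would unfold $\EqPair{f}{g} \sqsubseteq \EqPair{f'}{g'}$, which by definition means $\Pair{f}{g} \precsim \Pair{f'}{g'}$: for every $\sigma$ with $g'(\sigma) \neq \infty$ we have $f(\sigma) \leq f'(\sigma)$ and $g(\sigma) \leq g'(\sigma)$. For $\SKIP$ this is immediate, and for $\ASSIGN{x}{E}$ one observes that the substitution $\subst{x}{E}$ merely evaluates the components in the updated state $\sigma[x \mapsto \sigma(E)]$, so the defining implication is preserved since it holds in \emph{every} state, in particular every updated one. For $\COMPOSE{C_1}{C_2}$ I would apply the induction hypothesis twice: first to $C_2$ to get $\wpeqp{C_2}{f}{g} \sqsubseteq \wpeqp{C_2}{f'}{g'}$, then to $C_1$. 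For $\ITE{\pguard}{C_1}{C_2}$ the two branches compare by the induction hypothesis, so it remains to check that the operations $\eval{\pguard}\cdot(\cdot)$, $\eval{\neg\pguard}\cdot(\cdot)$, and $+$ on $\EqPairs$ (Definition~\ref{def:iwe}) are monotonic with respect to $\sqsubseteq$. Since $\eval{\pguard},\eval{\neg\pguard}$ are non-negative, multiplication by them acts pointwise on the first component by a non-negative factor and on the second by the same factor, so both inequalities $f \leq f'$ and $g \leq g'$ are preserved wherever the relevant second component is finite; monotonicity of pointwise addition of pairs is equally direct.

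The main obstacle is the while-loop $\WHILEDO{\pguard}{C'}$, where $\wpeqp{\WHILEDO{\pguard}{C'}}{f}{g} = \lim_{n \To \omega} \charwpn{\pguard}{C'}{\EqPair{f}{g}}{n}\EqPair{\zero}{\zero}$. My approach would be to exploit the explicit representation of the iterates established in the proof of Theorem~\ref{thm:while-well-def}, namely
\begin{align*}
	\charwpn{\pguard}{C'}{\EqPair{f}{g}}{n}\EqPair{\zero}{\zero} ~=~ \bigEqPair{\charwpn{\pguard}{C'}{|f| + f}{n}(\zero) - \charwpn{\pguard}{C'}{|f|}{n}(\zero)}{\charwpn{\pguard}{C'}{g}{n}(\zero)}~.
\end{align*}
This reduces the comparison of the two limits to comparisons of the \emph{non-negative} weakest pre-expectations $\wp{\WHILEDO{\pguard}{C'}}{g}$, $\wp{\WHILEDO{\pguard}{C'}}{|f|+f}$, and $\wp{\WHILEDO{\pguard}{C'}}{|f|}$, where classical monotonicity (Theorem~\ref{thm:wp-prop}(2)) and linearity apply. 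For a state $\sigma$ with $g'(\sigma) \neq \infty$, the hypothesis gives $g(\sigma) \leq g'(\sigma)$, and classical monotonicity of $\wpsymbol$ yields $\wp{\WHILEDO{\pguard}{C'}}{g}(\sigma) \leq \wp{\WHILEDO{\pguard}{C'}}{g'}(\sigma)$, which handles the second component. For the first component I would argue that when $\wp{\WHILEDO{\pguard}{C'}}{g'}(\sigma) \neq \infty$, both relevant non-negative pre-expectations are finite in $\sigma$ (as shown in the proof of Theorem~\ref{thm:while-well-def}), so the first components of the two limits are genuine real numbers given by the difference $\wp{\WHILEDO{\pguard}{C'}}{|f|+f}(\sigma) - \wp{\WHILEDO{\pguard}{C'}}{|f|}(\sigma)$; comparing these across the pairs again reduces to classical $\wpsymbol$-monotonicity applied to the pointwise inequality $f(\sigma) \leq f'(\sigma)$. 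The delicate point to get right is that this reasoning must only be invoked at states where the \emph{larger} second component $g'$ (and hence its loop-pre-expectation) is finite, exactly as the definition of $\precsim$ demands; at the remaining states the first component is irrelevant by construction, so nothing need be checked. Care is also required because the first components are not themselves monotone as sequences, so I cannot compare the iterates term-by-term — the comparison must be made only in the limit, via the finite non-negative pre-expectations.
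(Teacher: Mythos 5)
Your proposal is correct in substance and, for most of its length, coincides with the paper's proof: the same structural induction, the same routine handling of the non-loop cases, and the same key ingredient for the loop case, namely the representation of the iterates $\charwpn{\pguard}{C'}{\EqPair{f}{g}}{n}\EqPair{\zero}{\zero}$ established in the proof of Theorem~\ref{thm:while-well-def}. Where you genuinely diverge is the final step. The paper runs a \emph{second} induction, on $n$, showing that the iterates of the two pairs compare at each $n$ (at every state $\sigma$ with $\wp{\WHILEDO{\pguard}{C'}}{g'}(\sigma)<\infty$, after moving the subtracted terms across the inequality, which finiteness permits), and only then passes to the limit. You compare the limits directly, using that at such a $\sigma$ the first components of the two limits are the real numbers $\wp{\WHILEDO{\pguard}{C'}}{|f|+f}(\sigma)-\wp{\WHILEDO{\pguard}{C'}}{|f|}(\sigma)$ and $\wp{\WHILEDO{\pguard}{C'}}{|f'|+f'}(\sigma)-\wp{\WHILEDO{\pguard}{C'}}{|f'|}(\sigma)$; this buys a real simplification, as it eliminates the induction on $n$ altogether in favour of one application of Theorem~\ref{thm:wp-prop}(2),(3) to the loop itself. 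Three remarks. First, your stated reason for working in the limit---that the iterates are not monotone in $n$---is not an actual obstruction: non-monotonicity in $n$ does not prevent comparing the $n$-th iterate of one pair with the $n$-th iterate of the other, which is exactly what the paper does. Second, the reduction to classical monotonicity is \emph{not} term-wise on the two differences (the subtrahends $\wp{\WHILEDO{\pguard}{C'}}{|f|}$ and $\wp{\WHILEDO{\pguard}{C'}}{|f'|}$ compare the wrong way); you must use finiteness to rewrite the claim as $\wp{\WHILEDO{\pguard}{C'}}{|f|+f}(\sigma)+\wp{\WHILEDO{\pguard}{C'}}{|f'|}(\sigma)\leq\wp{\WHILEDO{\pguard}{C'}}{|f'|+f'}(\sigma)+\wp{\WHILEDO{\pguard}{C'}}{|f|}(\sigma)$, merge each side by linearity, and then apply monotonicity to $|f|+f+|f'|\leq |f'|+f'+|f|$, which is equivalent to $f\leq f'$---the same rearrangement the paper performs on the iterates; you gesture at linearity and finiteness, so this is recoverable, but it should be explicit. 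Third, the hypothesis yields $f\leq f'$ only at states where $g'\neq\infty$ (note that $g\leq g'$ holds at \emph{all} states, trivially where $g'=\infty$, so the second components are unproblematic), whereas Theorem~\ref{thm:wp-prop}(2) as stated requires the inequality everywhere; this is a genuine subtlety, but it is equally present in the paper's own proof (its induction hypothesis also holds only at such states, yet is fed into monotonicity of $\wpsymbol[C']$), so it does not disqualify your route---it is simply the one point of the whole theorem that both arguments leave at sketch level.
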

\begin{proof}
	The proof goes by induction on the structure of $C$.
	Let $\EqPair{f}{g} \sqsubseteq \EqPair{f'}{g'}$.
	All cases are straightforward, except for the while--loop.
	For that, reconsider
	\begin{align*}
		\charwpn{\pguard}{C'}{\EqPair{f}{g}}{n}\EqPair{\zero}{\zero} ~=~ \bigEqPair{\charwpn{\pguard}{C'}{|f| + f}{n}(\zero) - \charwpn{\pguard}{C'}{|f|}{n}(\zero)}{\charwpn{\pguard}{C'}{g}{n}(\zero)}
	\end{align*}
	from the proof of Theorem~\ref{thm:while-well-def}.
	Given that fact, the proof boils down to showing by induction on $n$ that the inequality
	\begin{align*}
	\label{eq:ineq-mon}
		&\charwpn{\pguard}{C'}{|f| + f}{n}(\zero)(\sigma) - \charwpn{\pguard}{C'}{|f|}{n}(\zero)(\sigma) \tag{$\dagger$}\\
		& ~\leq~ \charwpn{\pguard}{C'}{|f'| + f'}{n}(\zero)(\sigma) - \charwpn{\pguard}{C'}{|f'|}{n}(\zero)(\sigma)~,
	\end{align*}
	holds if $\wp{\WHILEDO{\xi}{C'}}{g'}(\sigma) < \infty$ (and therefore by monotonicity also $\wp{\WHILEDO{\xi}{C'}}{g}(\sigma) < \infty$) holds.
	In that case, both 
	\begin{align*}
		\charwpn{\pguard}{C'}{|f|}{n+1}(\zero)(\sigma) \leq \charwpn{\pguard}{C'}{g}{n+1}(\zero)(\sigma) &\leq \left(\lfp\charwp{\pguard}{C'}{g}\right)(\sigma) < \infty
		\intertext{and}
		\charwpn{\pguard}{C'}{|f'|}{n+1}(\zero)(\sigma) \leq \charwpn{\pguard}{C'}{g'}{n+1}(\zero)(\sigma) &\leq \left(\lfp\charwp{\pguard}{C'}{g'}\right)(\sigma) < \infty
	\end{align*} 
	holds, and we can thus rewrite inequality~($\dagger$) as
	\begin{align*}
		&\charwpn{\pguard}{C'}{|f| + f}{n}(\zero)(\sigma) +\charwpn{\pguard}{C'}{|f'|}{n}(\zero)(\sigma)\\
		&  ~\leq~ \charwpn{\pguard}{C'}{|f'| + f'}{n}(\zero)(\sigma) + \charwpn{\pguard}{C'}{|f|}{n}(\zero)(\sigma)~,
	\end{align*}
	and prove that statement by induction on $n$ instead.
	For the induction step, consider the following:
	\begin{align*}
		& \charwpn{\pguard}{C'}{|f| + f}{n+1}(\zero)(\sigma) + \charwpn{\pguard}{C'}{|f'|}{n+1}(\zero)(\sigma) \\
		& ~\leq~ \charwpn{\pguard}{C'}{|f'| + f'}{n+1}(\zero)(\sigma) + \charwpn{\pguard}{C'}{|f|}{n+1}(\zero)(\sigma)\\
	\intertext{iff}
		&\big(\probof{\xi}{\false} \cdot (|f| + f + |f'|)\big)(\sigma) \\
		&{} + \left(\probof{\xi}{\true} \cdot \wp{C'}{\charwpn{\pguard}{C'}{|f| + f}{n}(\zero) + \charwpn{\pguard}{C'}{|f'|}{n}(\zero)}\right)(\sigma) \\
		\leq{}& \big(\probof{\xi}{\false} \cdot (|f'| + f' + |f|)\big)(\sigma) \\
		&{} + \left(\probof{\xi}{\true} \cdot \wp{C'}{\charwpn{\pguard}{C'}{|f'| + f'}{n}(\zero) + \charwpn{\pguard}{C'}{|f|}{n}(\zero)}\right)(\sigma)~, \tag{by definition of characteristic functional and linearity of $\wpsymbol$}
	\end{align*}
	which follows from the induction hypothesis on $n$ and by monotonicity of $\wpsymbol$.
\end{proof}

\subsection{Reasoning about Loops}

\noindent
Whereas reasoning about non--loopy programs is mostly straightforward, reasoning about the $\wpeqpsymbol$ of a loop is more complicated as it involves reasoning about limits of integrability--witnessing expectation sequences.
To help overcoming this difficulty, we present now an invariant--based approach that allows for over--approximating those limits.

We have already seen that the fact
\begin{align*}
		\charwpn{\pguard}{C'}{\EqPair{f}{g}}{n}\EqPair{\zero}{\zero} ~=~ \bigEqPair{\charwpn{\pguard}{C'}{|f| + f}{n}(\zero) - \charwpn{\pguard}{C'}{|f|}{n}(\zero)}{\charwpn{\pguard}{C'}{g}{n}(\zero)}
\end{align*}
from the proof of Theorem~\ref{thm:while-well-def} was vital to showing monotonicity of the $\wpeqpsymbol$--transformer.
It will also allow us to reason about integrability--witnessing pre--expectations through reasoning about standard weakest pre--expectations, which is simpler since we have an easy--to--apply invariant rule for these.

If we take a closer look at the sequence
\begin{align*}
	\left(\bigEqPair{\charwpn{\pguard}{C'}{|f| + f}{n}(\zero) - \charwpn{\pguard}{C'}{|f|}{n}(\zero)}{\charwpn{\pguard}{C'}{g}{n}(\zero)}\right)_{n\in\mathbb{N}}
\end{align*}
we can see that in order to over--approximate the limit of that sequence, we can---simply put---
\begin{enumerate}
	\item
	over--approximate the limit---i.e.\ the supremum---of $\charwpn{\pguard}{C'}{g}{n}(\zero)$,
	\item
	over--approximate the limit---i.e.\ again the supremum---of $\charwpn{\pguard}{C'}{|f| + f}{n}(\zero)$, and
	\item
	under--approximate the limit---once again: the supremum---of $\charwpn{\pguard}{C'}{|f|}{n}(\zero)$.
\end{enumerate}
Notice that these over- and under--approximations are over- and under--ap{\-}prox{\-}i{\-}ma{\-}tions of standard weakest pre--expectations.
Furthermore, recall that by \mbox{Theorem}~\ref{thm:wp-prop}~(4)~and~(5) we have invariant rules for those over-- and under--approximations.
This immediately leads us to the following proof rule for loops:
\begin{theorem}[Loop Invariants for $\boldsymbol{\widetilde{\wpsymbol}}$]
\label{thm:invariants}
	Let $\EqPair{f}{g} \in \EqPairs$, $C' \in \pProgs$, $I, G \in \E$ with $G(\sigma) < \infty$, for all $\sigma \in \States$, and $(H_n)_{n \in \mathbb N} \subseteq \E$. Then $\charwp{\xi}{C'}{g}(G) \leq G$, $\charwp{\xi}{C'}{|f| + f}(I) \leq I$, $H_0 \leq \charwp{\pguard}{C'}{|f|}(\zero)$, and $H_{n+1} \leq \charwp{\pguard}{C'}{|f|}(H_n)$ implies
	\begin{align*}
		\wpeqp{\WHILEDO{\xi}{C'}}{f}{g} ~\sqsubseteq~ \bigEqPair{I - \sup_{n\in \mathbb N} H_n}{2 \cdot G}~.
	\end{align*}
\end{theorem}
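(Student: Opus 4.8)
The plan is to reduce the claim entirely to standard non--negative weakest pre--expectation reasoning by exploiting the decomposition
\begin{align*}
	\charwpn{\pguard}{C'}{\EqPair{f}{g}}{n}\EqPair{\zero}{\zero} ~=~ \bigEqPair{\charwpn{\pguard}{C'}{|f| + f}{n}(\zero) - \charwpn{\pguard}{C'}{|f|}{n}(\zero)}{\charwpn{\pguard}{C'}{g}{n}(\zero)}
\end{align*}
established in the proof of Theorem~\ref{thm:while-well-def}. Passing to the limit $n \To \omega$, the integrability--witnessing pre--expectation $\wpeqp{\WHILEDO{\xi}{C'}}{f}{g}$ is represented by the pair whose second component is $\wp{\WHILEDO{\pguard}{C'}}{g}$ and whose first component is the state--wise difference of the limits of $\charwpn{\pguard}{C'}{|f| + f}{n}(\zero)$ and $\charwpn{\pguard}{C'}{|f|}{n}(\zero)$.

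First I would apply the Upper Loop Invariant rule (Theorem~\ref{thm:wp-prop}~(4)) to the hypothesis $\charwp{\xi}{C'}{g}(G) \leq G$, obtaining $\wp{\WHILEDO{\xi}{C'}}{g} \leq G$. Since $G(\sigma) < \infty$ for every $\sigma$ by assumption, the second component of the loop's pre--expectation is then finite in every state. This is the crucial consequence: it guarantees that in the limit definition the ``otherwise'' branch is never taken, so the first component is genuinely the pointwise limit $\lim_{n \To \omega}\big(\charwpn{\pguard}{C'}{|f| + f}{n}(\zero)(\sigma) - \charwpn{\pguard}{C'}{|f|}{n}(\zero)(\sigma)\big)$ rather than an arbitrary value. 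Because both iterate sequences are monotone and bounded above by finite quantities (by $2 \cdot G$ and $G$ respectively, exactly as in Theorem~\ref{thm:while-well-def}), each converges, and the limit of the difference equals $\wp{\WHILEDO{\pguard}{C'}}{|f| + f}(\sigma) - \wp{\WHILEDO{\pguard}{C'}}{|f|}(\sigma)$, a difference of finite reals.

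Next I would bound the two surviving suprema separately. Applying the Upper Loop Invariant rule again, now to $\charwp{\xi}{C'}{|f| + f}(I) \leq I$, yields $\wp{\WHILEDO{\xi}{C'}}{|f| + f} \leq I$; and applying the Lower Loop Invariant rule (Theorem~\ref{thm:wp-prop}~(5)) to the hypotheses $H_0 \leq \charwp{\pguard}{C'}{|f|}(\zero)$ and $H_{n+1} \leq \charwp{\pguard}{C'}{|f|}(H_n)$ yields $\sup_{n \in \mathbb N} H_n \leq \wp{\WHILEDO{\pguard}{C'}}{|f|}$. Combining these---and using $\wp{\WHILEDO{\pguard}{C'}}{|f|} \leq \wp{\WHILEDO{\pguard}{C'}}{g} \leq G < \infty$ so that no $\infty - \infty$ can arise---the first component satisfies
\begin{align*}
	\wp{\WHILEDO{\pguard}{C'}}{|f| + f} - \wp{\WHILEDO{\pguard}{C'}}{|f|} ~\leq~ I - \sup_{n \in \mathbb N} H_n
\end{align*}
pointwise, while the second component satisfies $\wp{\WHILEDO{\xi}{C'}}{g} \leq G \leq 2 \cdot G$.

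Finally I would read off the conclusion from the definition of $\sqsubseteq$: since the right--hand pair has second component $2 \cdot G$, which is finite in every state, the order reduces to comparing both components state--wise, and the two pointwise inequalities just established are precisely what is required, giving $\wpeqp{\WHILEDO{\xi}{C'}}{f}{g} \sqsubseteq \bigEqPair{I - \sup_{n \in \mathbb N} H_n}{2 \cdot G}$. The main obstacle I anticipate is the bookkeeping around infinities: one must argue carefully that finiteness of $\wp{\WHILEDO{\xi}{C'}}{g}$ propagates to make all the iterate limits and their difference well--defined (so that the limit of a difference is the difference of the limits), and that the displayed inequality remains valid in states where $I$ itself is $+\infty$, where it holds trivially because its left--hand side is finite. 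Verifying that the target $\bigEqPair{I - \sup_{n \in \mathbb N} H_n}{2 \cdot G}$ is a legitimate integrability--witnessing expectation---which is where the factor $2$ in $2 \cdot G$ provides the needed slack, since $\wp{\WHILEDO{\xi}{C'}}{|f| + f} \leq 2 \cdot G$---is a related routine point.
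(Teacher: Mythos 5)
Your proposal is correct and takes essentially the same route as the paper: the paper's own justification is precisely the decomposition from the proof of Theorem~\ref{thm:while-well-def}, followed by over--approximating $\sup_n \charwpn{\pguard}{C'}{g}{n}(\zero)$ and $\sup_n \charwpn{\pguard}{C'}{|f|+f}{n}(\zero)$ via Theorem~\ref{thm:wp-prop}~(4) and under--approximating $\sup_n \charwpn{\pguard}{C'}{|f|}{n}(\zero)$ via Theorem~\ref{thm:wp-prop}~(5), then combining componentwise under $\sqsubseteq$. If anything, your write--up is more careful than the paper's sketch on the finiteness bookkeeping (monotone bounded iterates, limit of differences equals difference of limits, and the trivial case $I(\sigma)=\infty$).
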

\noindent
By similar considerations, we can find a dual theorem for lower bounds, see Appendix~\ref{app:lower-bounds}.
Notice that we have to use $2\cdot G$ in the second component of the over--ap{\-}prox{\-}i{\-}ma{\-}tion of $\wpeqp{\WHILEDO{\xi}{C'}}{f}{g}$.
This is just to ensure that the second component really bounds the absolute value of the first component.
Using $G$ instead might not yield a proper member of $\EqPairs$.
Notice that using $2 \cdot G$ does not effect the integrability--witnessing property of the second component.

Let us now illustrate the use of the loop invariant rule from Theorem~\ref{thm:invariants} by means of a worked example:
\begin{example}[Towards Amortized Expected Run--Time Analysis]
\label{ex:amortized}
Suppose we need to perform an amortized analysis of a randomized data structure by means of a potential function $\Phi$.
Suppose further that a certain operation $\mathit{Op}$ first increases the potential by 1 and thereafter keeps flipping a coin until the first heads.
With every flip of tails though, the potential is decreased by 3.
We can model this situation by means of the following probabilistic program:
\begin{align*}
	C_\mathit{Op}\boldsymbol{\colon}~~~&\COMPOSE{\ASSIGN{\Phi}{\Phi + 1}}{} \WHILE~(\nicefrac{1}{2})~\{ \ASSIGN{\Phi}{\Phi - 3} \}
\end{align*}
Here $\Phi$ represents the change in the potential function $\Phi$.
Notice that the change in potential might very well be positive (in fact with probability $\nicefrac 1 2$) as well as negative, so both possibilities have to be accounted for.

Even though an application of the operation $\mathit{Op}$ might increase the potential, we now want to prove that an application of $\mathit{Op}$ \emph{decreases} the potential \emph{in expectation}.
This amounts to proving that the pre--expectation of $\Phi$ evaluated in any initial state $\sigma$ with $\sigma(\Phi) = 0$ is negative.
For that, we need to calculate 
\begin{align*}
	&\wpeqp{C_\mathit{Op}}{\Phi}{|\Phi|}\\
	&~=~ \wpeqpsymbol[\ASSIGN{\Phi}{\Phi + 1};~\WHILEDO{\nicefrac{1}{2}}{\ldots}]\EqPair{\Phi}{|\Phi|}\\
	&~=~ \wpeqpsymbol[\ASSIGN{\Phi}{\Phi + 1}]\big(\wpeqp{\WHILEDO{\nicefrac{1}{2}}{\ldots}}{\Phi}{|\Phi|}\big)
\end{align*}
So the first thing we need to do is to reason about the pre--expectation of the while--loop.
Appealing to Theorem~\ref{thm:invariants}, we propose following loop invariants
\begin{align*}
	&G = \sum_{i = 0}^{\omega} \frac{|\Phi - 3 \cdot i|}{2^{i+1}}, ~\quad~ I = \sum_{i = 0}^{\omega} \frac{|\Phi - 3 \cdot i|}{2^{i+1}} + \Phi - 3,\\
	& ~\qquad\text{and}\qquad~ H_n = \sum_{i = 0}^{n} \frac{|\Phi - 3 \cdot i|}{2^{i+1}}~.
\end{align*}
Indeed, one can verify that these loop invariants satisfy the preconditions of Theorem~\ref{thm:invariants}.
Furthermore, we observe that $\sup_{n \in \mathbb N} H_n = G$ holds.
Applying \mbox{Theorem}~\ref{thm:invariants} therefore yields
\begin{align*}
	\wpeqp{\WHILEDO{\nicefrac{1}{2}}{\ASSIGN{\Phi}{\Phi - 3}}}{\Phi}{|\Phi|} \sqsubseteq \EqPair{I - G}{2 \cdot G}~.
\end{align*}
Because $G$ and $I$ are absolutely convergent for any valuation of $\Phi$ (e.g.\ by the ratio test), we can calculate $I - G = \Phi - 3$, and so we get
\begin{align*}
	\wpeqp{\WHILEDO{\nicefrac{1}{2}}{\ASSIGN{\Phi}{\Phi - 3}}}{\Phi}{|\Phi|} \sqsubseteq \EqPair{\Phi - 3}{2 \cdot G}~.
\end{align*}
Since $\wpeqpsymbol$ is monotonic (see Theorem~\ref{thm:wp-mixed-mon}), we can now safely continue our reasoning with the over--approximation $\EqPair{\Phi - 3}{2 \cdot G}$ and calculate
\begin{align*}
	&\wpeqp{\ASSIGN{\Phi}{\Phi {+} 1}}{\Phi {-} 3}{2 G} = \hugeEqPair{\Phi {-} 2}{\sum_{i {=} 0}^{\omega} \frac{|\Phi {+}1 {-} 3 i|}{2^{i{+}1}}}~.
\end{align*}
By that, we get in total an over--ap{\-}prox{\-}i{\-}ma{\-}tion of the sought--after pre--ex{\-}pec{\-}ta{\-}tion $\wpeqp{C_\mathit{Op}}{\Phi}{|\Phi|}$.
If we instantiate the second component of that over--ap{\-}prox{\-}i{\-}ma{\-}tion in an initial state $\sigma$ with $\sigma(\Phi) = 0$, we get
\begin{align*}
	\sum_{i = 0}^{\omega} \frac{|0 +1 - 3 \cdot i|}{2^{i+1}} ~=~ 3 ~<~ \infty~.
\end{align*}
So the expected value at $\sigma$ was integrable and thus it makes sense to evaluate the first component in $\sigma$ (which is what we are really interested in).
This gives $0 - 2 = -2$ and thus executing $\mathit{Op}$ decreases the potential \emph{in expectation} by 2.
\hfill$\triangle$
\end{example}
\noindent
Notice that the analysis performed as in \mbox{Example}~\ref{ex:amortized} would not be possible using either the deduction rules of PPDL~\cite{DBLP:journals/jcss/Kozen85} or the invariant--based approach of McIver \& Morgan's $\wpsymbol$--calculus~\cite{mciver} off--the--shelf. Instead, a tailor--made argument would be needed for reasoning about the mixed--sign $\Phi$.
For more details on this matter and a more involved worked example, see Appendix~\ref{app:comparison}.

\subsection{Soundness}

\noindent
The last but certainly not least important property that we establish for our $\wpeqpsymbol$ transformer is that it is sound, meaning that if we can establish $\wpeqp{C}{f}{g} = \EqPair{f'}{g'}$ then $f'(\sigma)$ is in fact the expected value of $f$ after termination of $C$ on initial state $\sigma$.
For that, we first generalize the fact established in the proof of Theorem~\ref{thm:while-well-def}:
\begin{lemma}
\label{lem:important}
	Let $\EqPair{f}{g} \in \EqPairs$ with $\wp{C}{g}(\sigma) < \infty$ and $\EqPair{f'}{g'} = \wpeqp{C}{f}{g}$.
	Then $f'(\sigma) = \wp{C}{|f| + f}(\sigma) - \wp{C}{|f|}(\sigma)$.
\end{lemma}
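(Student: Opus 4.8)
The plan is to prove, by induction on the structure of $C$, the following two-part strengthening: for every $\EqPair{f}{g} \in \EqPairs$, writing $\EqPair{f'}{g'} = \wpeqp{C}{f}{g}$, we have \textbf{(a)} $g' = \wp{C}{g}$, and \textbf{(b)} for every $\sigma \in \States$ with $g'(\sigma) < \infty$, $f'(\sigma) = \wp{C}{|f| + f}(\sigma) - \wp{C}{|f|}(\sigma)$. Part (a) follows routinely from \autoref{table:wp-pairs}, since the second component of each rule mirrors the corresponding $\wpsymbol$-rule of \autoref{table:wp-rules} (for the loop this is exactly the second-component computation carried out inside the proof of Theorem~\ref{thm:while-well-def}). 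Part (b) is the actual content; the hypothesis $g'(\sigma) = \wp{C}{g}(\sigma) < \infty$ makes both $\wp{C}{|f|}(\sigma) \le \wp{C}{g}(\sigma)$ and $\wp{C}{|f| + f}(\sigma) \le 2 \cdot \wp{C}{|f|}(\sigma)$ finite (by monotonicity and linearity, Theorem~\ref{thm:wp-prop}), so the right-hand side is well defined; it is also insensitive to the choice of representative of $\EqPair{f}{g}$ on $\{g = \infty\}$, since that set is $\sem{C}(\delta_\sigma)$-null whenever $\wp{C}{g}(\sigma) < \infty$.

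The base cases $\SKIP$ and $\ASSIGN{x}{E}$ are immediate: for $\SKIP$ the right-hand side collapses to $(|f| + f) - |f| = f = f'$, and for the assignment one uses $|f|\subst{x}{E} = |f\subst{x}{E}|$ together with the fact that substitution distributes over addition. The conditional $\ITE{\pguard}{C_1}{C_2}$ is a short computation: at a state $\sigma$ with $g'(\sigma) < \infty$ the guard weighting forces $\wp{C_i}{g}(\sigma) < \infty$ whenever the corresponding branch has positive probability, so the induction hypothesis (b) applies to that branch, while a branch of probability $0$ contributes $0 \cdot \infty = 0$ to every term. Linearity of $\wpsymbol$ and of the pointwise operations on $\EqPairs$ then let me pull the two differences through the convex combination, yielding $\wp{\ITE{\pguard}{C_1}{C_2}}{|f| + f}(\sigma) - \wp{\ITE{\pguard}{C_1}{C_2}}{|f|}(\sigma)$.

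The main obstacle is the sequential composition $\COMPOSE{C_1}{C_2}$. Here $\wpeqp{\COMPOSE{C_1}{C_2}}{f}{g} = \wpeqp{C_1}{f_2}{g_2}$ with $\EqPair{f_2}{g_2} = \wpeqp{C_2}{f}{g}$, and part (a) gives $g_2 = \wp{C_2}{g}$. Fix $\sigma$ with $\wp{C_1}{g_2}(\sigma) < \infty$; the induction hypothesis (b) for $C_1$ yields $f'(\sigma) = \wp{C_1}{|f_2| + f_2}(\sigma) - \wp{C_1}{|f_2|}(\sigma)$. The difficulty is that $f_2$ is pinned down by the induction hypothesis (b) for $C_2$ only on $\{g_2 < \infty\}$, where $f_2 = a - b$ with $a \coloneqq \wp{C_2}{|f| + f}$ and $b \coloneqq \wp{C_2}{|f|}$, while on the ``don't care'' set $N \coloneqq \{g_2 = \infty\}$ the representative $f_2$ is arbitrary. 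The observation that rescues the argument is that $\wp{C_1}{g_2}(\sigma) = \Exp{\sem{C_1}(\delta_\sigma)}{g_2} < \infty$ forces $N$ to be $\sem{C_1}(\delta_\sigma)$-null, so the values on $N$ are irrelevant to every $\wp{C_1}{\cdot}(\sigma)$ occurring here. I would then show $\wp{C_1}{|f_2| + f_2}(\sigma) + \wp{C_1}{b}(\sigma) = \wp{C_1}{a}(\sigma) + \wp{C_1}{|f_2|}(\sigma)$: by linearity this reduces to $\wp{C_1}{(|f_2| + f_2) + b}(\sigma) = \wp{C_1}{a + |f_2|}(\sigma)$, and the two integrands agree off $N$ because on $\{g_2 < \infty\}$ both equal $a + |a - b|$ (using $a, b < \infty$ there). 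Since $b \le g_2$ and $|f_2| \le g_2$, the subtracted terms are finite, so I may cancel them to obtain $f'(\sigma) = \wp{C_1}{a}(\sigma) - \wp{C_1}{b}(\sigma) = \wp{\COMPOSE{C_1}{C_2}}{|f| + f}(\sigma) - \wp{\COMPOSE{C_1}{C_2}}{|f|}(\sigma)$, as required.

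Finally, the loop $\WHILEDO{\pguard}{C'}$ is handled directly by the identity established inside the proof of Theorem~\ref{thm:while-well-def}, namely that the $n$-th iterate $\charwpn{\pguard}{C'}{\EqPair{f}{g}}{n}\EqPair{\zero}{\zero}$ has first component $\charwpn{\pguard}{C'}{|f| + f}{n}(\zero) - \charwpn{\pguard}{C'}{|f|}{n}(\zero)$ and second component $\charwpn{\pguard}{C'}{g}{n}(\zero)$. At a state $\sigma$ with $\wp{\WHILEDO{\pguard}{C'}}{g}(\sigma) < \infty$, both non-negative sequences $\charwpn{\pguard}{C'}{|f| + f}{n}(\zero)(\sigma)$ and $\charwpn{\pguard}{C'}{|f|}{n}(\zero)(\sigma)$ increase monotonically to finite suprema (as in the proof of Theorem~\ref{thm:while-well-def}), so their limits exist and equal the respective $\wpsymbol$ of the loop; taking the difference of the limits gives exactly $\wp{\WHILEDO{\pguard}{C'}}{|f| + f}(\sigma) - \wp{\WHILEDO{\pguard}{C'}}{|f|}(\sigma)$, which is the claim.
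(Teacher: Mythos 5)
Your proposal is correct and takes essentially the same approach as the paper's own (one--line) proof: structural induction on $C$, with the loop case discharged by the iterate identity established in the proof of Theorem~\ref{thm:while-well-def}. The extra care you take at sequential composition---showing that the choice of representative of $\wpeqp{C_2}{f}{g}$ on the set where its second component is infinite cannot affect $\wpsymbol[C_1]$ at $\sigma$, because $\wp{C_1}{g_2}(\sigma) < \infty$---is exactly the detail the paper's sketch leaves implicit, not a departure from its method.
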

\begin{proof}
	By induction on $C$ using the fact from the proof of Theorem~\ref{thm:while-well-def}. 
\end{proof}
\noindent
In standard probability theory any mixed--sign random variable $f$ can be decomposed into a positive part $\positive{f}  = \lambda \sigma \mydot \max\{0,\, f(\sigma)\} \in \E$ and a negative part $\negative{f} = \lambda \sigma \mydot {-}\min\{0,\, f(\sigma)\} \in \E$, with $f = \positive{f} - \negative{f}$.
Notice that $\positive{f}$ and $\negative{f}$ are both non--negative expectations.
The expected value of $f$ is then defined as the expected value of $\positive{f}$ minus the expected value of $\negative{f}$, i.e.\ as
\begin{align*}
	\Exp{\mu}{f} ~=~ \Exp{\mu}{\positive{f}} - \Exp{\mu}{\negative{f}}~,
\end{align*}
if both $\Exp{\mu}{\positive{f}} < \infty$ and $\Exp{\mu}{\negative{f}} < \infty$.
Using this---sometimes called---\emph{Jordan decomposition} of $f$, we can now establish the following lemma:
\begin{lemma}
\label{lem:presoundness}
	Let $f \in \Epm$, $g \in \E$ with $|f| \leq g$, $\wp{C}{g}(\sigma) < \infty$, $\wpeqp{C}{f}{g}\allowbreak = \EqPair{f'}{g'}$, and $f = \positive{f} - \negative{f}$.
	Then $f'(\sigma) + \wp{C}{\negative{f}}(\sigma) = \wp{C}{\positive{f}}(\sigma)$.
\end{lemma}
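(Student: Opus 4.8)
The plan is to derive this lemma as a short consequence of Lemma~\ref{lem:important} together with the linearity of the standard $\wpsymbol$ transformer, so I expect no serious conceptual obstacle beyond keeping track of finiteness. First I would invoke Lemma~\ref{lem:important}: since the hypotheses assumed here, namely $\wp{C}{g}(\sigma) < \infty$ and $\wpeqp{C}{f}{g} = \EqPair{f'}{g'}$, are exactly those required there, we immediately obtain
\begin{align*}
    f'(\sigma) ~=~ \wp{C}{|f| + f}(\sigma) - \wp{C}{|f|}(\sigma)~.
\end{align*}

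Next I would invoke the two pointwise algebraic identities arising from the Jordan decomposition $f = \positive{f} - \negative{f}$, namely $|f| = \positive{f} + \negative{f}$ and, adding $f$ to this, $|f| + f = 2 \cdot \positive{f}$. Before applying linearity I would record the finiteness side conditions: both $\positive{f}$ and $\negative{f}$ are non--negative expectations in $\E$ bounded above by $|f| \leq g$, so by monotonicity of $\wpsymbol$ (Theorem~\ref{thm:wp-prop}~(2)) we have $\wp{C}{\positive{f}}(\sigma) \leq \wp{C}{g}(\sigma) < \infty$ and likewise $\wp{C}{\negative{f}}(\sigma) < \infty$. This finiteness is the only delicate point: it guarantees that the arithmetic below stays within $\R$ and never produces an ill--defined $\infty - \infty$.

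Applying linearity of $\wpsymbol$ (Theorem~\ref{thm:wp-prop}~(3)) to the two identities then gives $\wp{C}{|f| + f} = 2 \cdot \wp{C}{\positive{f}}$ and $\wp{C}{|f|} = \wp{C}{\positive{f}} + \wp{C}{\negative{f}}$. Substituting these into the expression for $f'(\sigma)$ obtained from Lemma~\ref{lem:important} and simplifying, I get
\begin{align*}
    f'(\sigma) ~=~ 2 \cdot \wp{C}{\positive{f}}(\sigma) - \wp{C}{\positive{f}}(\sigma) - \wp{C}{\negative{f}}(\sigma) ~=~ \wp{C}{\positive{f}}(\sigma) - \wp{C}{\negative{f}}(\sigma)~.
\end{align*}
Rearranging yields $f'(\sigma) + \wp{C}{\negative{f}}(\sigma) = \wp{C}{\positive{f}}(\sigma)$, as claimed. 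The only real obstacle is thus bookkeeping rather than insight: one must confirm the finiteness of the two non--negative pre--expectations so that the cancellation $2a - a = a$ is legitimate over the extended reals, and this follows directly from the bound $\positive{f}, \negative{f} \leq g$ and monotonicity of $\wpsymbol$.
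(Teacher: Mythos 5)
Your proof is correct and takes essentially the same route as the paper's: both invoke Lemma~\ref{lem:important} and then combine the Jordan--decomposition identities with linearity and monotonicity of $\wpsymbol$, using finiteness of the relevant pre--expectations (guaranteed by $|f| \leq g$ and $\wp{C}{g}(\sigma) < \infty$) to legitimize the cancellation over the extended reals. The only difference is organizational: the paper runs an equivalence chain that keeps everything as sums, using $\positive{f} = f + \negative{f}$ and only the finiteness of $\wp{C}{|f|}(\sigma)$, whereas you split $\wp{C}{|f| + f}$ and $\wp{C}{|f|}$ explicitly via $|f| + f = 2 \cdot \positive{f}$ and $|f| = \positive{f} + \negative{f}$ before canceling---the content is the same.
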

\begin{proof}
Consider the following:
\belowdisplayskip=-1\baselineskip
\begin{align*}
	&f'(\sigma) + \wp{C}{\negative{f}}(\sigma) ~=~ \wp{C}{\positive{f}}(\sigma)\\
	\Longleftrightarrow~&\wp{C}{|f| + f}(\sigma) - \wp{C}{|f|}(\sigma) + \wp{C}{\negative{f}}(\sigma)\\
	&~=~ \wp{C}{\positive{f}}(\sigma) \tag{Lemma \ref{lem:important}}\\
	\Longleftrightarrow~&\wp{C}{|f| + f}(\sigma) + \wp{C}{\negative{f}}(\sigma)\\
	&~=~ \wp{C}{\positive{f}}(\sigma) + \wp{C}{|f|}(\sigma) \tag{by $\wp{C}{|f|}(\sigma) \leq \wp{C}{g}(\sigma) < \infty$}\\
	\Longleftrightarrow~&\wp{C}{|f| + f + \negative{f}}(\sigma) ~=~ \wp{C}{\positive{f}+ |f|}(\sigma) \tag{by linearity of $\wpsymbol$}\\
	\Longleftrightarrow~&\wp{C}{|f| + \positive{f}}(\sigma) ~=~ \wp{C}{\positive{f}+ |f|}(\sigma) \tag{by $f = \positive{f} - \negative{f}$ iff $\positive{f} = f + \negative{f}$}\\
	\Longleftrightarrow~&\true
\end{align*}
\normalsize
\end{proof}
\noindent
The soundness of the $\wpeqpsymbol$ transformer follows now almost immediately from Lemma~\ref{lem:presoundness}.
\begin{theorem}[Soundness of $\boldsymbol{\widetilde{\boldwpsymbol}}$]
	Let $f\in \Epm$, $\wp{C}{|f|}(\sigma) < \infty$, and let $\wpeqp{C}{f}{|f|} \allowbreak= \EqPair{f'}{g'}$.
	Then $f'(\sigma)$ is the expected value of $f$ after termination of program $C$ on state $\sigma$, i.e.\ if $f = \positive{f} - \negative{f}$, then
	\begin{align*}
		f'(\sigma) ~=~ \wp{C}{\positive{f}}(\sigma) - \wp{C}{\negative{f}}(\sigma)~.
	\end{align*}
\end{theorem}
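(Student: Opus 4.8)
The plan is to deduce this statement almost directly from Lemma~\ref{lem:presoundness}, the only genuine work being to justify that the subtraction on the right-hand side is legitimate and that it really computes an expected value.

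First I would instantiate Lemma~\ref{lem:presoundness} with the choice $g = |f|$. The side condition $|f| \leq g$ then holds trivially, the integrability hypothesis $\wp{C}{|f|}(\sigma) < \infty$ is exactly the assumption of the theorem, and $\wpeqp{C}{f}{|f|} = \EqPair{f'}{g'}$ is given. The lemma therefore yields $f'(\sigma) + \wp{C}{\negative{f}}(\sigma) = \wp{C}{\positive{f}}(\sigma)$.

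Second, I would argue that both $\wp{C}{\positive{f}}(\sigma)$ and $\wp{C}{\negative{f}}(\sigma)$ are finite. Since $\positive{f} \leq |f|$ and $\negative{f} \leq |f|$ pointwise, monotonicity of the standard transformer (Theorem~\ref{thm:wp-prop}~(2)) gives $\wp{C}{\positive{f}}(\sigma) \leq \wp{C}{|f|}(\sigma) < \infty$, and likewise for $\negative{f}$. This finiteness is what permits rewriting the equation of the first step as
\[
  f'(\sigma) ~=~ \wp{C}{\positive{f}}(\sigma) - \wp{C}{\negative{f}}(\sigma)
\]
without risking an ill-defined $\infty - \infty$, which already establishes the displayed identity of the theorem.

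Finally, I would connect the right-hand side to an actual expected value. Specializing the defining property of the standard weakest pre-expectation, Equation~(\ref{eq:def-wp}), to the Dirac distribution $\delta_\sigma$ gives $\wp{C}{\positive{f}}(\sigma) = \Exp{\sem{C}(\delta_\sigma)}{\positive{f}}$ and $\wp{C}{\negative{f}}(\sigma) = \Exp{\sem{C}(\delta_\sigma)}{\negative{f}}$. Because both of these are finite by the second step, the Jordan-decomposition definition $\Exp{\mu}{f} = \Exp{\mu}{\positive{f}} - \Exp{\mu}{\negative{f}}$ applies to $\mu = \sem{C}(\delta_\sigma)$, so its right-hand side is well-defined and equals $f'(\sigma)$. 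Hence $f'(\sigma)$ is indeed the expected value of $f$ after termination of $C$ on $\sigma$. The only delicate point in the whole argument is this finiteness bookkeeping: it is precisely what guarantees that the mixed-sign expected value is defined at all, and it is the substantive content hidden behind the phrase ``follows almost immediately.''
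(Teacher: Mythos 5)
Your proof is correct and takes essentially the same approach as the paper: the paper's entire proof is the one-line instantiation $g = |f|$ in Lemma~\ref{lem:presoundness}, which is exactly your first step. The finiteness bookkeeping via monotonicity (Theorem~\ref{thm:wp-prop}~(2)) that justifies the rearrangement, and the link to the Jordan-decomposition definition of the expected value via Equation~(\ref{eq:def-wp}), are precisely the details the paper leaves implicit behind the phrase ``in principle.''
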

\begin{proof}
	In principle by setting $g = |f|$ in Lemma~\ref{lem:presoundness}.
\end{proof}
%
%


\section{Conclusion}
\label{sec:conclusion}
\noindent
We have presented a sound weakest pre--expectation calculus for reasoning about \emph{mixed--sign unbounded} expectations.
With this calculus, a pre--expectation can always be obtained, even in those cases where classical pre--expectations (i.e.\ expected values) do not exist.
We have shown that the semantics of while--loops is always well--defined in terms of a limit of iteratively applying a functional to a zero--element despite the fact that a standard least fixed point argument is not applicable in this context.
For reasoning about loops, we have presented an invariant--based technique and we have shown its applicability to an example inspired by amortized analysis of randomized algorithms.

\section*{Acknowledgements}
\noindent
We would like to thank the anonymous referees on their constructive feedback on an earlier version of this paper.



\bibliographystyle{IEEEtran}
\bibliography{IEEEabrv,literature}
%

\clearpage
\appendix
\section{Appendix}
\subsection{Proof of Expected Run--Time for Kozen's Example}
\label{app:runtimewpproof}

\noindent
The example studied by Kozen~\cite{DBLP:journals/jcss/Kozen85} is given by the following program:
\begin{align*}
	&\ASSIGN{x}{n};~ \ASSIGN{c}{0};\\
	&\WHILE~(x \neq 0)~\{\\
	&\qquad \ITE{\nicefrac{1}{2}}{\SKIP}{\ASSIGN{x}{x - 1}};\\
	&\qquad \ASSIGN{c}{c + 1}\}
\end{align*}
As~\cite{DBLP:journals/jcss/Kozen85}, we assume that all program variables range over the integers and that $n > 0$.
The (asymptotic) expected run--time of this program is given by the expected value of $c$ after termination of the program, as $c$ counts the number of loop iterations.
A detailed proof that $2 n$ is an upper bound of that expected run--time carried out in the $\wpsymbol$--calculus is given in the following:
\begin{proof}
	We start our analysis with post--expectation $c$.
	Appealing to \mbox{Theorem}~\ref{thm:wp-prop}~(4), we look for an upper invariant $I$ for the while--loop, such that
	\begin{align}
		\label{eq:kozeninv}
		\ExpToFun{x = 0} \cdot c + \ExpToFun{x \neq 0} \cdot \wp{\mathit{body}}{I} ~\leq~ I~,
	\end{align}
	e.g.\ $I = \ExpToFun{x \geq 0} (c + 2 x)$.
	Next, we check that $I$ indeed satisfies Equation~(\ref{eq:kozeninv}):
	\begin{align*}
		&\ExpToFun{x = 0} \cdot c + \ExpToFun{x \neq 0} \cdot \wp{\mathit{body}}{I}\\
		&{}= \vphantom{\frac{1}{2}}\ExpToFun{x = 0} \cdot c + \ExpToFun{x \neq 0} \cdot \wp{\mathtt{if}~(\nicefrac{1}{2})~\{\ldots\};~\ASSIGN{c}{c + 1}}{I}\\
		&{}= \vphantom{\frac{1}{2}}\ExpToFun{x = 0} \cdot c  + \ExpToFun{x \neq 0} \cdot \wp{\mathtt{if}\ldots}{\ExpToFun{x \geq 0} (c + 1 + 2 x)}\\
		&{}= \ExpToFun{x = 0} \cdot c + \ExpToFun{x \neq 0} \cdot \frac{1}{2} \big( \ExpToFun{x \geq 0} (c + 1 + 2 x)\\
		&~~\qquad\qquad\qquad\qquad\qquad + \ExpToFun{x - 1\geq 0} (c + 1 + 2 (x - 1)) \big)\\
		&{}= \ExpToFun{x = 0} \cdot c\\
		&\quad~~{} + \frac{1}{2} \big( \ExpToFun{x \geq 1} (c + 1 + 2 x) + \ExpToFun{x \geq 1} (c + 1 + 2 (x - 1)) \big)\\
		&{}= \ExpToFun{x = 0} \cdot c + \frac{1}{2} \ExpToFun{x \geq 1} (c + 1 + 2 x + c + 1 + 2 (x - 1))\\
		&{}= \vphantom{\frac{1}{2}}\ExpToFun{x = 0} \cdot (c + 2x) + \ExpToFun{x \geq 1} (c + 2x)\\
		&{}= \vphantom{\frac{1}{2}}\ExpToFun{x \geq 0} (c + 2 x) ~=~ I ~\leq~ I
	\end{align*}
	By Theorem \ref{thm:wp-prop} (4) we have now established $$\wp{\WHILEDO{x \neq 0}{\mathit{body}}}{c} ~\leq~ \ExpToFun{x \geq 0} (c + 2 x)~,$$ and by monotonicity of $\wpsymbol$, we can proceed our analysis with
	\begin{align*}
		&\wp{\ASSIGN{x}{n};\ASSIGN{c}{0}}{I}\\
		& ~=~ \wp{\ASSIGN{x}{n}}{\ExpToFun{x \geq 0} (0 + 2 x)}  ~=~  \ExpToFun{n \geq 0} \cdot 2 n~.
	\end{align*}
	By assumption $n > 0$, we have proven the run--time bound $2n$.
\end{proof}
\noindent
All calculations in the invariant verification involved only basic arithmetic.
The $\wpsymbol$--style proof that $2n$ is also a lower bound is slightly more involved.
The reader is invited to compare the above reasoning to the reasoning using \mbox{PPDL~\cite[pages 176--177]{DBLP:journals/jcss/Kozen85}}.

\subsection{A More Involved Example}
\label{app:comparison}

\noindent
Consider the following program
\begin{align*}
	&\WHILEDO{\nicefrac{1}{2}}{\ASSIGN{x}{-x-\textsf{sign}(x)}}
\end{align*}
We are interested in the expected value of program variable $x$ and will establish upper bounds for this expected value by means of two methods: first using the approach present in this paper and second by a Jordan--decomposition--based approach.
We will see that the latter is more involved.

\subsubsection{Analysis Using Integrability--Witnessing Expectations}
\label{app:comparison:iwe}
We perform an analysis of
\begin{align*}
\wpeqp{\WHILEDO{\nicefrac{1}{2}}{C'}}{x}{|x|}~,
\end{align*}
where we denote by $C'$ the program $\ASSIGN{x}{-x-\textsf{sign}(x)}$ for the sake of readability.
Appealing to Theorem~\ref{thm:invariants}, we propose the following loop invariants:
\begin{align*}
	&G = |x| + 1, ~~ I = |x| + \ExpToFun{x \neq 0} + \frac{x}{3} - \frac{\textsf{sign}(x)}{9},\\
	& ~\qquad\text{and}\qquad~ H_n = \sum_{i = 0}^{n} \frac{|x| + \ExpToFun{x \neq 0} \cdot i}{2^{i+1}}
\end{align*}
Next, we need to verify the four preconditions of Theorem~\ref{thm:invariants}.
First, $\charwp{\nicefrac{1}{2}}{C'}{|x|}(G) \leq G$:
\begin{align*}
	\charwp{\nicefrac{1}{2}}{C'}{|x|}(G) &{}~=~ \frac{1}{2}\cdot  |x| + \frac{1}{2} \cdot \wp{C'}{|x| + 1}\\
	&{}~=~ \frac{|x|}{2} + \frac{1}{2} \cdot \Big(\big|-x-\textsf{sign}(x)\big| + 1\Big)\\
	&{}~=~ \frac{|x|}{2} + \frac{| x | + \ExpToFun{x \neq 0} + 1}{2} \\
	&{}~\leq~ \frac{|x|}{2} + \frac{| x | + 1 + 1}{2}\\
	&{}~=~ |x| + 1 ~=~ G
\end{align*}
Second, $\charwp{\nicefrac{1}{2}}{C'}{|x| + x}(I) \leq I$:
\begin{align*}
	&\charwp{\nicefrac{1}{2}}{C'}{|x| + x}(I)\\
	&{}~=~ \frac{1}{2}\cdot  \big(|x| + x\big)\\
	&\qquad{}+ \frac{1}{2} \cdot \wp{C'}{|x| + \ExpToFun{x \neq 0} + \frac{x}{3} - \frac{\textsf{sign}(x)}{9}}\\
	&{}~=~ \frac{|x| + x}{2} + \frac{1}{2} \left(\big|-x-\textsf{sign}(x)\big| + \ExpToFun{-x-\textsf{sign}(x) \neq 0} \vphantom{\frac{(}{3}}\right.\\
	&\qquad\qquad\qquad\quad{} \left.{} + \frac{-x-\textsf{sign}(x)}{3} - \frac{\textsf{sign}(-x-\textsf{sign}(x))}{9}\right)\\
	&{}~=~ \frac{|x| + x}{2} + \frac{|x| + \ExpToFun{x \neq 0} + \ExpToFun{x \neq 0}}{2}\\
	&\qquad{} + \frac{-x-\textsf{sign}(x)}{6} + \frac{\textsf{sign}(x)}{18}\\
	&{}~=~ |x| + \ExpToFun{x \neq 0} + \frac{x}{3} - \frac{\textsf{sign}(x)}{9} ~=~ I ~\leq~ I
\end{align*}
Third, $\charwp{\nicefrac{1}{2}}{C'}{|x|}(\zero) \geq H_0$:
\begin{align*}
	\charwp{\nicefrac{1}{2}}{C'}{|x|}(\zero) &{}~=~ \frac{1}{2}\cdot  |x| + \frac{1}{2} \cdot \wp{C'}{\zero}\\
	&{}~=~ \frac{|x|}{2} + \frac{1}{2} \cdot \zero\\
	&{}~=~ \frac{|x|}{2}\\
	&{}~=~ \frac{|x| + \ExpToFun{x \neq 0} \cdot 0}{2^{0 + 1}}\\
	&{}~=~ \sum_{i = 0}^{0} \frac{|x| + \ExpToFun{x \neq 0} \cdot i}{2^{i + 1}} ~=~ H_0 ~\geq~ H_0
\end{align*}
Fourth, $\charwp{\nicefrac{1}{2}}{C'}{|x|}(H_{n}) \geq H_{n+1}$:
\begin{align*}
	&\charwp{\nicefrac{1}{2}}{C'}{|x|}(H_{n})\\
	&{}= \frac{1}{2}\cdot  |x| + \frac{1}{2} \cdot \wp{C'}{\sum_{i = 0}^{n} \frac{|x| + \ExpToFun{x \neq 0} \cdot i}{2^{i + 1}}}\\
	&{}= \frac{|x|}{2}  + \frac{1}{2} \left(\sum_{i = 0}^{n} \frac{|{-}x-\textsf{sign}(x)| + \ExpToFun{-x-\textsf{sign}(x) \neq 0} \cdot i}{2^{i + 1}}\right)\\
	&{}= \frac{|x| + \ExpToFun{x \neq 0} \cdot 0}{2^{0+1}} + \sum_{i = 0}^{n} \frac{|x| + \ExpToFun{x \neq 0} + \ExpToFun{x \neq 0} \cdot i}{2^{i + 1 + 1}}\\
	&{}= \frac{|x| + \ExpToFun{x \neq 0} \cdot 0}{2^{0+1}} + \sum_{i = 0}^{n} \frac{|x| + \ExpToFun{x \neq 0} \cdot (i + 1)}{2^{i + 1 + 1}}\\
	&{}= \frac{|x| + \ExpToFun{x \neq 0} \cdot 0}{2^{0+1}} + \sum_{i = 1}^{n+1} \frac{|x| + \ExpToFun{x \neq 0} \cdot i}{2^{i + 1}}\\
	&{}= \sum_{i = 0}^{n+1} \frac{|x| + \ExpToFun{x \neq 0} \cdot i}{2^{i + 1}} = H_{n+1} \geq H_{n+1}
\end{align*}
Finally, we have to analyze $\sup_{n \in \mathbb{N}} H_n$, which is fairly straightforward:
\begin{align*}
	\sup_{n \in \mathbb{N}} ~ H_n  & ~=~ \sup_{n \in \mathbb{N}} ~  \sum_{i = 0}^{n} \frac{|x| + \ExpToFun{x \neq 0} \cdot i}{2^{i+1}} \\
	& ~=~ \sum_{i = 0}^{\omega} \frac{|x| + \ExpToFun{x \neq 0} \cdot i}{2^{i+1}} \\
	& ~=~ |x| \cdot \sum_{i = 0}^{\omega} \frac{1}{2^{i+1}} +  \ExpToFun{x \neq 0} \cdot\sum_{i = 0}^{\omega} \frac{i}{2^{i+1}} \\
	& ~=~ |x| \cdot 1 +  \ExpToFun{x \neq 0} \cdot 1 \\
	& ~=~ |x| +  \ExpToFun{x \neq 0}
\end{align*}
Applying \mbox{Theorem}~\ref{thm:invariants} therefore yields
\begin{align*}
	&\wpeqp{\WHILEDO{\nicefrac{1}{2}}{C'}}{x}{|x|}\\
	& ~\sqsubseteq~ \bigEqPair{I - \sup_{n \in \mathbb{N}} H_n}{2 \cdot G}\\
	& ~=~ \bigEqPair{\frac{x}{3} - \frac{\textsf{sign}(x)}{9}}{2 \cdot G}
\end{align*}
Since $2 |x| + 2 < \boldsymbol{\infty}$, we obtain that $\nicefrac{x}{3} - \nicefrac{\textsf{sign}(x)}{9}$ is an over--approximation of the expected value of $x$ after execution of the program for any initial state.

\subsubsection{Analysis Using Jordan Decomposition}

If one wanted to perform an equivalent analysis using the Jordan decomposition of $x$ into $\positive{x} = \max\{x,\, 0\}$ and $\negative{x} = {-}\min\{x,\, 0\}$, one would have to first prove integrability of $\positive{x}$ and $\negative{x}$.
For that, it suffices to find two invariants $G_{\positive{x}}$ and $G_{\negative{x}}$ such that both $\charwp{\nicefrac{1}{2}}{C'}{\positive{x}}\left(G_{\positive{x}}\right) \leq G_{\positive{x}}$ and $\charwp{\nicefrac{1}{2}}{C'}{\negative{x}}\left(G_{\negative{x}}\right) \leq G_{\negative{x}}$.
The simplest invariants we were able to come up with are:
\begin{align*}
	G_{\positive{x}} ~=~ &\ExpToFun{x > 0} \cdot \left( \frac{2}{3} x + \frac{4}{9} \right) + \ExpToFun{x < 0} \cdot \left( {-}\frac{1}{3} x + \frac{5}{9} \right)\\
	G_{\negative{x}} ~=~ &\ExpToFun{x > 0} \cdot \left( \frac{1}{3} x + \frac{5}{9} \right) + \ExpToFun{x < 0} \cdot \left( {-}\frac{2}{3} x + \frac{4}{9} \right)
\end{align*}
Next, we have to establish that those are indeed invariants.
First, $\charwp{\nicefrac{1}{2}}{C'}{\positive{x}}\left(G_{\positive{x}}\right) \leq G_{\positive{x}}$:
\begin{align*}
	&\charwp{\nicefrac{1}{2}}{C'}{\positive{x}}\left(G_{\positive{x}}\right)\\
	&{}= \frac{1}{2}\cdot \positive{x} + \frac{1}{2} \cdot \wpsymbol[{C'}]\left(\ExpToFun{x > 0} \cdot \left( \frac{2}{3} x + \frac{4}{9} \right) \right.\\
	&\qquad\left. {} + \ExpToFun{x < 0} \cdot \left( {-}\frac{1}{3} x + \frac{5}{9} \right) \right)\\
	&{}= \frac{1}{2}\cdot \positive{x}\\
	&\quad~{} + \frac{1}{2} \cdot \left(\ExpToFun{-x-\textsf{sign}(x) > 0} \cdot \left( \frac{2}{3} \big( -x-\textsf{sign}(x) \big) + \frac{4}{9} \right) \right.\\
	&\quad~~\left. {} + \ExpToFun{-x-\textsf{sign}(x) < 0} \cdot \left( {-}\frac{1}{3} \big( -x-\textsf{sign}(x) \big) + \frac{5}{9} \right) \right)\\
	&{}= \frac{1}{2}\ExpToFun{x > 0}x  + \frac{1}{2} \cdot \left(\ExpToFun{x < 0} \cdot \left( \frac{2}{3} \big( -x + 1 \big) + \frac{4}{9} \right) \right.\\
	&\qquad\left. {} + \ExpToFun{x > 0} \cdot \left( {-}\frac{1}{3} \big( -x-1 \big) + \frac{5}{9} \right) \right)\\
	&{}= \ExpToFun{x > 0}\left(\frac{1}{2}x  - \frac{1}{3} \big( -x-1 \big) + \frac{5}{9}\right) \\
	&\qquad {} + \ExpToFun{x < 0} \cdot \left(  \frac{1}{2} \cdot \left(\frac{2}{3} \big( -x + 1 \big) + \frac{4}{9} \right) \right)\\
	&{}= \ExpToFun{x > 0} \cdot \left( \frac{2}{3} x + \frac{4}{9} \right) + \ExpToFun{x < 0} \cdot \left( {-}\frac{1}{3} x + \frac{5}{9} \right)\\
	&{}= G_{\positive{x}} \leq G_{\positive{x}}
\end{align*}
Second, $\charwp{\nicefrac{1}{2}}{C'}{\negative{x}}\left(G_{\negative{x}}\right) \leq G_{\negative{x}}$:
\begin{align*}
	&\charwp{\nicefrac{1}{2}}{C'}{\negative{x}}\left(G_{\negative{x}}\right)\\
	&{}~=~ \frac{1}{2}\cdot \negative{x} + \frac{1}{2} \cdot \wpsymbol[{C'}]\left(\ExpToFun{x > 0} \cdot \left( \frac{1}{3} x + \frac{5}{9} \right) \right.\\
	&\qquad\left. {} + \ExpToFun{x < 0} \cdot \left( {-}\frac{2}{3} x + \frac{4}{9} \right) \right)\\
	&{}= \frac{1}{2}\cdot \negative{x}\\
	&\quad{}\: + \frac{1}{2} \cdot \left(\ExpToFun{-x-\textsf{sign}(x) > 0} \cdot \left( \frac{1}{3} \big( -x-\textsf{sign}(x) \big) + \frac{5}{9} \right) \right.\\
	&\quad~\:\left. {} + \ExpToFun{-x-\textsf{sign}(x) < 0} \cdot \left( {-}\frac{2}{3} \big( -x-\textsf{sign}(x) \big) + \frac{4}{9} \right) \right)\\
	&{}= \frac{1}{2}\ExpToFun{x > 0}x  + \frac{1}{2} \cdot \left(\ExpToFun{x < 0} \cdot \left( \frac{1}{3} \big( -x + 1 \big) + \frac{5}{9} \right) \right.\\
	&\qquad\left. {} + \ExpToFun{x > 0} \cdot \left( {-}\frac{2}{3} \big( -x-1 \big) + \frac{4}{9} \right) \right)\\
	&{}= \ExpToFun{x > 0}\left(\frac{1}{2}x  - \frac{2}{3} \big( -x-1 \big) + \frac{4}{9}\right) \\
	&\qquad {} + \ExpToFun{x < 0} \cdot \left(  \frac{1}{2} \cdot \left(\frac{1}{3} \big( -x + 1 \big) + \frac{5}{9} \right) \right)\\
	&{}= \ExpToFun{x > 0} \cdot \left( \frac{1}{3} x + \frac{5}{9} \right) + \ExpToFun{x < 0} \cdot \left( {-}\frac{2}{3} x + \frac{4}{9} \right)\\
	&{}= G_{\negative{x}} \leq G_{\positive{x}}
\end{align*}
We have by now established integrability of both $\positive{x}$ and $\negative{x}$. 
Furthermore, we know that $G_{\positive{x}}$ is an upper bound of the expected value of $\positive{x}$.

Alternatively, instead of proving integrability of $\positive{x}$ and $\negative{x}$ individually, we could have proved the integrability of $|x|$ using the simpler invariant $G$ from the analysis using integrability witnessing pairs, see Appendix~\ref{app:comparison:iwe}. 
We would then still need to find an upper bound for $\positive{x}$ using invariant $G_{\positive{x}}$, but we would get rid of one of the complicated invariant, namely $G_{\negative{x}}$.

We now need to establish a lower bound for $\negative{x}$.
For that, we need a lower $\omega$--invariant $H_n$.
Again, the simplest we were able to come up with is given by
\begin{align*}
	H_n ~=~ \sum_{i = 0}^{n} {-}\frac{\min\big\{(-1)^i \cdot (x + \textsf{sign}(x) \cdot i ),\, 0\big\}}{2^{i+1}}~.
\end{align*}
For verifying that $H_n$ is indeed an $\omega$--invariant, we need to check two conditions.
First, $\charwp{\nicefrac{1}{2}}{C'}{\negative{x}}(\zero) \geq H_0$:
\begin{align*}
	\charwp{\nicefrac{1}{2}}{C'}{\negative{x}}(\zero) &{}~=~ \frac{1}{2}\cdot \negative{x} + \frac{1}{2} \cdot \wp{C'}{\zero}\\
	&{}~=~ \frac{1}{2}\cdot \negative{x} + \frac{1}{2} \cdot \zero\\
	&{}~=~ \frac{\negative{x}}{2}\\
	&{}~=~ {-}\frac{\min\{x,\, 0\}}{2}\\
	&{}~=~ {-}\frac{\min\big\{(-1)^0 \cdot (x + \textsf{sign}(x) \cdot 0),\, 0\big\}}{2^{0+1}}\\
	&{}~=~ \sum_{i = 0}^{0} {-}\frac{\min\big\{(-1)^i \cdot (x + \textsf{sign}(x) \cdot i ),\, 0\big\}}{2^{i+1}}\\
	&{}~=~ H_0 ~\geq~ H_0
\end{align*}
Second, $\charwp{\nicefrac{1}{2}}{C'}{\negative{x}}(H_{n}) \geq H_{n+1}$:
\begin{align*}
	&\charwp{\nicefrac{1}{2}}{C'}{\negative{x}}(H_{n})\\
	&{}= \frac{1}{2}\cdot  \negative{x}\\
	&\qquad{} + \frac{1}{2} \cdot \wp{C'}{\sum_{i = 0}^{n} {-}\frac{\min\big\{(-1)^i \cdot (x + \textsf{sign}(x) \cdot i ),\, 0\big\}}{2^{i+1}}}\\
	&{}= {-}\frac{\min\{x,\, 0\}}{2} + \frac{1}{2} \sum_{i = 0}^{n} {-}\frac{N}{2^{i+1}}~,\\
	&\text{where } N = \min\big\{(-1)^i \cdot (- x - \textsf{sign}(x)\\
	&\qquad\qquad\qquad\qquad + \textsf{sign}(- x - \textsf{sign}(x)) \cdot i ),\, 0\big\}\\
	&{}= {-}\frac{\min\big\{(-1)^0 \cdot (x + \textsf{sign}(x)\cdot0),\, 0\big\}}{2^{0 + 1}}\\
	&\qquad{} + \frac{1}{2} \sum_{i = 0}^{n} {-}\frac{\min\big\{(-1)^i \cdot (- (x + \textsf{sign}(x)) - \textsf{sign}(x) \cdot i ),\, 0\big\}}{2^{i+1}}\\
	&{}= {-}\frac{\min\big\{(-1)^0 \cdot (x + \textsf{sign}(x)\cdot0),\, 0\big\}}{2^{0 + 1}}\\
	&\qquad{} + \sum_{i = 0}^{n} {-}\frac{\min\big\{(-1)^{i+1} \cdot (x + \textsf{sign}(x) \cdot (i+1) ),\, 0\big\}}{2^{i+1 + 1}}\\
	&{}= {-}\frac{\min\big\{(-1)^0 \cdot (x + \textsf{sign}(x)\cdot0),\, 0\big\}}{2^{0 + 1}}\\
	&\qquad{} + \sum_{i = 1}^{n+1} {-}\frac{\min\big\{(-1)^{i} \cdot (x + \textsf{sign}(x) \cdot i ),\, 0\big\}}{2^{i + 1}}\\
	&{}= \sum_{i = 0}^{n+1} {-}\frac{\min\big\{(-1)^{i} \cdot (x + \textsf{sign}(x) \cdot i ),\, 0\big\}}{2^{i + 1}}\\
	&{}= H_{n+1} \geq H_{n+1}
\end{align*}
We can now argue that an upper bound for the expected value of $x$ is given by
\begin{align*}
	G_{\positive{x}} - \sup_{n \in \mathbb N} H_n~.
\end{align*}
Arguing about $\sup_{n \in \mathbb N} H_n$ is more involved than it was in the case of the integrability--witnessing expectation analysis where obtaining the supremum was immediate:
\begin{align*}
	&\sup_{n \in \mathbb N}~ H_n\\
	& ~=~ \sup_{n \in \mathbb N}~ \sum_{i = 0}^{n} {-}\frac{\min\big\{(-1)^i \cdot (x + \textsf{sign}(x) \cdot i ),\, 0\big\}}{2^{i+1}}\\
	& ~=~ \sum_{i = 0}^{\omega} {-}\frac{\min\big\{(-1)^i \cdot (x + \textsf{sign}(x) \cdot i ),\, 0\big\}}{2^{i+1}}\\
	& ~=~ \sum_{i = 0}^{\omega} {-}\frac{\min\big\{x + \textsf{sign}(x) \cdot 2i,\, 0\big\}}{2^{2i+1}}\\
	&~\qquad + \sum_{i = 0}^{\omega} {-}\frac{\min\big\{ -x - \textsf{sign}(x) \cdot (2i + 1),\, 0\big\}}{2^{2i+2}}\\
	& ~=~ \ExpToFun{x > 0}\left(\sum_{i = 0}^{\omega} {-}\frac{\min\big\{x + \textsf{sign}(x) \cdot 2i,\, 0\big\}}{2^{2i+1}}\right.\\
	&~\qquad\qquad\quad~ \left. + \sum_{i = 0}^{\omega} {-}\frac{\min\big\{ -x - \textsf{sign}(x) \cdot (2i + 1),\, 0\big\}}{2^{2i+2}}\right)\\
	& \qquad {} + \ExpToFun{x < 0}\left(\sum_{i = 0}^{\omega} {-}\frac{\min\big\{x + \textsf{sign}(x) \cdot 2i,\, 0\big\}}{2^{2i+1}}\right.\\
	&~\qquad\qquad\quad~ \left. + \sum_{i = 0}^{\omega} {-}\frac{\min\big\{ -x - \textsf{sign}(x) \cdot (2i + 1),\, 0\big\}}{2^{2i+2}}\right)\\
	& ~=~ \ExpToFun{x > 0}\left(\sum_{i = 0}^{\omega} {-}\frac{0}{2^{2i+1}} + \sum_{i = 0}^{\omega} {-}\frac{-x - (2i + 1)}{2^{2i+2}}\right)\\
	& \qquad {} + \ExpToFun{x < 0}\left(\sum_{i = 0}^{\omega} {-}\frac{x - 2i}{2^{2i+1}} + \sum_{i = 0}^{\omega} {-}\frac{0}{2^{2i+2}}\right)\\
	& ~=~ \ExpToFun{x > 0}\left(\sum_{i = 0}^{\omega} \frac{x + 2i + 1}{2^{2i+2}}\right)  + \ExpToFun{x < 0}\left(\sum_{i = 0}^{\omega} \frac{-x + 2i}{2^{2i+1}}\right)\\
	& ~=~ \ExpToFun{x > 0} \cdot \left( \frac{1}{3} x + \frac{5}{9} \right) + \ExpToFun{x < 0} \cdot \left( {-}\frac{2}{3} x + \frac{4}{9} \right)\tag{obtained with the aid of Wolfram$|$Alpha}\\
	& ~=~ G_{\negative{x}}
\end{align*}
Finally, we can calculate an upper bound for the expected value of $x$ by
\begin{align*}
	&G_{\positive{x}} - \sup_{n \in \mathbb N} H_n\\
	& ~=~ G_{\positive{x}} - G_{\negative{x}}\\
	& ~=~ \ExpToFun{x > 0} \cdot \left( \frac{2}{3} x + \frac{4}{9} \right) + \ExpToFun{x < 0} \cdot \left( {-}\frac{1}{3} x + \frac{5}{9} \right)\\
	&\qquad {}-\left( \ExpToFun{x > 0} \cdot \left( \frac{1}{3} x + \frac{5}{9} \right) + \ExpToFun{x < 0} \cdot \left( {-}\frac{2}{3} x + \frac{4}{9} \right) \right)\\
	& ~=~ \ExpToFun{x > 0} \cdot \left( \frac{2}{3} x + \frac{4}{9} - \frac{1}{3} x - \frac{5}{9}\right) \\
	& \qquad {}+ \ExpToFun{x < 0} \cdot \left( {-}\frac{1}{3} x + \frac{5}{9} + \frac{2}{3} x - \frac{4}{9}\right)\\
	& ~=~ \ExpToFun{x > 0} \cdot \left( \frac{x}{3}- \frac{1}{9}\right) + \ExpToFun{x < 0} \cdot \left(\frac{x}{3}+ \frac{1}{9}\right)\\
	& ~=~ \frac{x}{3} + \frac{\ExpToFun{x > 0} \cdot (-1) + \ExpToFun{x < 0} \cdot 1}{9}\\
	& ~=~ \frac{x}{3} - \frac{\textsf{sign}(x)}{9}
\end{align*}
and we obtain the same result as we did with the integrability--witnessing expectation analysis.

\subsection{Loop Invariants for Lower Bounds}
\label{app:lower-bounds}

\noindent
If we take a closer look at the sequence
\begin{align*}
	\left(\bigEqPair{\charwpn{\pguard}{C'}{|f| + f}{n}(\zero) - \charwpn{\pguard}{C'}{|f|}{n}(\zero)}{\charwpn{\pguard}{C'}{g}{n}(\zero)}\right)_{n\in\mathbb{N}}
\end{align*}
we can see that in order to under--approximate the limit of that sequence, we can---simply put---
\begin{enumerate}
	\item
	over--approximate the limit---i.e.\ the supremum---of $\charwpn{\pguard}{C'}{g}{n}(\zero)$,
	\item
	under--approximate the limit---i.e.\ again the supremum---of $\charwpn{\pguard}{C'}{|f| + f}{n}(\zero)$, and
	\item
	over--approximate the limit---once again: the supremum---of $\charwpn{\pguard}{C'}{|f|}{n}(\zero)$.
\end{enumerate}
Notice that these over- and under--approximations are over- and under--ap{\-}prox{\-}i{\-}ma{\-}tions of standard weakest pre--expectations.
Furthermore, recall that by \mbox{Theorem}~\ref{thm:wp-prop}~(4)~and~(5) we have invariant rules for those over-- and under--approximations.
This immediately leads us to the following proof rule for loops:
\begin{theorem}[Loop Invariants for Lower Bounds of $\boldsymbol{\widetilde{\wpsymbol}}$]
	Let $\EqPair{f}{g} \in \EqPairs$, $C' \in \pProgs$, $I, G \in \E$ with $G(\sigma) < \infty$, for all $\sigma \in \States$, and $(H_n)_{n \in \mathbb N} \subseteq \E$. Then $\charwp{\xi}{C'}{g}(G) \leq G$, $\charwp{\xi}{C'}{|f|}(I) \leq I$, $H_0 \leq \charwp{\pguard}{C'}{|f| + f}(\zero)$, and $H_{n+1} \leq \charwp{\pguard}{C'}{|f| + f}(H_n)$ implies
	\begin{align*}
		\bigEqPair{\sup_{n\in \mathbb N} H_n - I}{2 \cdot G} ~\sqsubseteq~ \wpeqp{\WHILEDO{\xi}{C'}}{f}{g}~.
	\end{align*}
\end{theorem}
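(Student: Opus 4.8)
The plan is to mirror the (dual) proof of Theorem~\ref{thm:invariants}, reusing the iterate decomposition from the proof of Theorem~\ref{thm:while-well-def} together with the standard--$\wpsymbol$ invariant rules of Theorem~\ref{thm:wp-prop}, but with the roles of the upper-- and lower--$\wpsymbol$ invariants interchanged. Throughout I write $\EqPair{f'}{g'} = \wpeqp{\WHILEDO{\xi}{C'}}{f}{g}$ and recall that for every $n$,
\begin{align*}
	\charwpn{\pguard}{C'}{\EqPair{f}{g}}{n}\EqPair{\zero}{\zero} ~=~ \bigEqPair{\charwpn{\pguard}{C'}{|f| + f}{n}(\zero) - \charwpn{\pguard}{C'}{|f|}{n}(\zero)}{\charwpn{\pguard}{C'}{g}{n}(\zero)}~.
\end{align*}

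First I would establish global integrability. From $\charwp{\pguard}{C'}{g}(G) \leq G$ and Theorem~\ref{thm:wp-prop}~(4) we obtain $\wp{\WHILEDO{\xi}{C'}}{g} \leq G$, and since $G(\sigma) < \infty$ for all $\sigma$ this forces $g'(\sigma) = \wp{\WHILEDO{\xi}{C'}}{g}(\sigma) < \infty$ in every state. Hence Lemma~\ref{lem:important} applies everywhere and identifies the first component as $f'(\sigma) = \wp{\WHILEDO{\xi}{C'}}{|f| + f}(\sigma) - \wp{\WHILEDO{\xi}{C'}}{|f|}(\sigma)$, where both summands are finite because $\wp{\WHILEDO{\xi}{C'}}{|f|} \leq \wp{\WHILEDO{\xi}{C'}}{g} \leq G$ and $\wp{\WHILEDO{\xi}{C'}}{|f|+f} \leq 2\,\wp{\WHILEDO{\xi}{C'}}{g} \leq 2G$.

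The heart of the argument is then the pointwise lower bound on $f'$. The remaining hypotheses feed directly into the invariant rules: the upper invariant $\charwp{\pguard}{C'}{|f|}(I) \leq I$ gives $\wp{\WHILEDO{\xi}{C'}}{|f|} \leq I$ via Theorem~\ref{thm:wp-prop}~(4), and the lower $\omega$--invariant $H_0 \leq \charwp{\pguard}{C'}{|f| + f}(\zero)$, $H_{n+1} \leq \charwp{\pguard}{C'}{|f| + f}(H_n)$ gives $\sup_{n} H_n \leq \wp{\WHILEDO{\xi}{C'}}{|f| + f}$ via Theorem~\ref{thm:wp-prop}~(5). Combining (the subtraction is well defined by the finiteness established above) yields, in every state,
\begin{align*}
	\sup_{n} H_n - I ~\leq~ \wp{\WHILEDO{\xi}{C'}}{|f| + f} - \wp{\WHILEDO{\xi}{C'}}{|f|} ~=~ f'~.
\end{align*}

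Finally I would assemble the $\sqsubseteq$--statement, and this is where I expect the real work to lie. The first--component inequality above is the substantive content and follows routinely from the invariant rules; the delicate part is the second--component bookkeeping, exactly as the $2\cdot G$ device serves in Theorem~\ref{thm:invariants}. Concretely, one must check that $\EqPair{\sup_{n} H_n - I}{2\cdot G}$ is a legitimate member of $\EqPairs$, i.e.\ that $2\cdot G$ dominates $|\sup_{n} H_n - I|$ (here $\sup_n H_n \leq \wp{\WHILEDO{\xi}{C'}}{|f|+f} \leq 2G$ and $0 \leq \wp{\WHILEDO{\xi}{C'}}{|f|} \leq I$ control one side), and that the witness chosen for the approximating pair is reconciled with the definition of the quasi--order $\precsim$ underlying $\sqsubseteq$. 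I would therefore treat the passage from the clean first--component estimate to the formal $\sqsubseteq$--relation---rather than the estimate itself---as the main obstacle.
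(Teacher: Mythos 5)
Your proposal follows precisely the route the paper itself intends: the paper gives no worked proof of this dual theorem (Appendix~\ref{app:lower-bounds} only states the three approximation steps you reproduce), and your first--component estimate is correct and matches that recipe. The role--swap of the invariants relative to Theorem~\ref{thm:invariants} is exactly right: Theorem~\ref{thm:wp-prop}~(4) applied to the $G$-- and $I$--hypotheses gives $\wp{\WHILEDO{\xi}{C'}}{g} \leq G$ and $\wp{\WHILEDO{\xi}{C'}}{|f|} \leq I$, Theorem~\ref{thm:wp-prop}~(5) gives $\sup_{n} H_n \leq \wp{\WHILEDO{\xi}{C'}}{|f|+f}$, and via the iterate decomposition from the proof of Theorem~\ref{thm:while-well-def} (equivalently Lemma~\ref{lem:important}) this yields $\sup_n H_n - I \leq f'$ in every state, all subtractions being licensed by the finiteness of $G$.

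However, the step you defer as ``the main obstacle'' is not deferrable bookkeeping---it is a step that fails, and no argument can close it as the statement stands. Write $\EqPair{f'}{g'} = \wpeqp{\WHILEDO{\xi}{C'}}{f}{g}$; as you yourself establish, $g'(\sigma) = \wp{\WHILEDO{\xi}{C'}}{g}(\sigma) \leq G(\sigma) < \infty$ holds in \emph{every} state $\sigma$. Hence the antecedent $g'(\sigma) \neq \infty$ in the definition of $\precsim$ is always satisfied, and the claimed relation $\EqPair{\sup_n H_n - I}{2 \cdot G} \sqsubseteq \EqPair{f'}{g'}$ demands, in every state, \emph{both} $\sup_n H_n - I \leq f'$ \emph{and} $2 \cdot G \leq g'$. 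The second conjunct, combined with $g' \leq G$, forces $G(\sigma) = 0$ everywhere, so the conclusion can only hold in a degenerate case. Note the asymmetry with Theorem~\ref{thm:invariants}: there the $2 \cdot G$ device sits on the \emph{right} of $\sqsubseteq$, where one needs $g' \leq 2G$ (true); here it sits on the \emph{left}, where one needs $2G \leq g'$ (false in general). Separately, $\Pairs$--membership of the bounding pair, $|\sup_n H_n - I| \leq 2G$, can also fail, since $I$ may be an arbitrarily loose upper invariant for $|f|$. So the obstruction you sensed lies in the statement itself, not in your argument: the provable content of the dual rule is exactly the pointwise first--component inequality $\sup_n H_n - I \leq f'$ that you derived (meaningful on its own, because $g' < \infty$ certifies integrability), and packaging it as a $\sqsubseteq$--statement would require a second component that simultaneously dominates $|\sup_n H_n - I|$ and is dominated by $g'$---something the hypotheses do not provide.
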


\end{document}